\documentclass[a4paper, 10pt]{iopart}
\pdfoutput=1

\expandafter\let\csname equation*\endcsname\relax
\expandafter\let\csname endequation*\endcsname\relax

% Preamble
%%%%%%%%%%% LANGUAGE %%%%%%%%%%%

% For interpreting non-ASCII characters
\usepackage[utf8]{inputenc}

%%%%%%%%%%% FORMAL STUFF %%%%%%%%%%%

% Dates & time
\usepackage[yyyymmdd]{datetime} % Useful when referencing websites
 % ISO 8601 date format

% Lists
\usepackage{enumerate} % Determines the style in which the counter is printed
\usepackage{enumitem} % Provides user control over the layout of the three basic list environments

% Citing & bibliography
\usepackage{csquotes} % For \enquote command for proper quotation marks, also biblatex recommends this
\usepackage[numbers]{natbib}

%%%%%%%%%%% PHYSICS %%%%%%%%%%%

\usepackage{braket} % Defines \bra, \ket, \braket, and \set
\usepackage{tensor} % Covariant index notation

%%%%%%%%%%% MATHEMATICS %%%%%%%%%%%

% AMS packages
\usepackage{amsmath}
\usepackage{amsfonts}
\usepackage{amsthm}
\usepackage{amssymb}
\usepackage{mathrsfs}

%%%%%%%%%%% PAGE SIZE %%%%%%%%%%%%%%

\setlength{\marginparwidth}{37mm}
\setlength{\textwidth}{148mm}		% Width of text on page 148
\setlength{\textheight}{235mm}		% height of text on page 235
\setlength{\topmargin}{-5mm}		% Margin at top of page -5
\setlength{\oddsidemargin}{5mm} % Odd page sidemargin 5
\setlength{\evensidemargin}{5mm}	% Even page sidemargin 5
\setlength{\marginparwidth}{25mm}

%%%%%%%%%%% MISCELLANEOUS %%%%%%%%%%%

% Code for margin notes
\newcounter{mnotecount}[section]

% This is an uglier but faster alternative to the todonotes package
\usepackage{xstring}
\usepackage[]{marginnote} 
\newcommand{\todo}[2][]{%
	\IfStrEqCase{#1}{%
        {inline}{{\color{red}#2}}%
	}[%
		\marginnote{\color{red}#2}%
	]
}

% Clickable links and refs
\usepackage{hyperref}								
\hypersetup{final, colorlinks, citecolor=blue, filecolor=blue, linkcolor=blue, urlcolor=blue}

% Cleverref automatically detects if you are referencing a figure, table, or equation etc
% Cleverref has to be loaded last I think, after babel and hyperref etc
\usepackage[noabbrev]{cleveref}
\crefname{equation}{}{}
\iflanguage{swedish}{ % Tell cleverref to use Oxford comma
	
}{}
\iflanguage{english}{
	
}{}

% ORCID links
\usepackage{orcidlink}

% For proper line breaks in citations that contain urls
\usepackage[]{xurl}

% Theorem and proof environments
% Has to be put after cleveref
\iflanguage{english}{
    \newtheorem{theorem}{Theorem}
    \newtheorem*{theorem*}{Theorem}
    \newtheorem{proposition}[theorem]{Proposition}
    \newtheorem*{proposition*}{Proposition}
    
    \newtheorem{corollary*}{Corollary}
    \newtheorem{lemma}[theorem]{Lemma}
    \newtheorem*{lemma*}{Lemma}
    
    \newtheorem*{remark*}{Remark}
    \theoremstyle{definition}
    \newtheorem{definition}[theorem]{Definition}
    \newtheorem*{definition*}{Definition}
}{}

% Write \overset{text}&{=} in align environments to align with respect to the = symbol
\usepackage{aligned-overset}

%%%%%%%%%%% USER DEFINED COMMANDS %%%%%%%%%%%

% Main number systems

\def\complexes{\mathbb{C}}

% Some shorthands for correct spacing in math environments
\def\from{\colon} % Proper spacing of colon in f:A -> B
\def\definedas{\coloneqq}

\newcommand*{\conjugate}[1]{\mkern 1.0mu\overline{\mkern-1.0mu#1\mkern-1.0mu}\mkern 1.0mu} % Complex conjugate
 % Dual vector space

% Lie-groups & algebras, e.g. SU(n)

\newcommand*{\group}[2]{{\mathrm{\MakeUppercase{#1}}}{\left(#2\right)}}

% Fundamental operators
\def\sDiv{\mathscr{D}}
\def\sTwist{\mathscr{T}}
\def\sCurl{\mathscr{C}}
\def\sCurlDagger{\mathscr{C}^\dagger}

% Symmetric multiplication
\newcommand*{\SymMult}[2]{\overset{#1, #2}{\odot}}

% SymH

% Notation for irreducible parts of symmetry operator coefficients
\newcommand*{\coeff}[3]{\underset{#2, #3}{#1}{}}

% The space of symmetric spinors
\def\SymSpace{\mathcal{S}}

% The name of the operator (phi, chi) \mapsto (lambda, gamma)
\def\symmetryOperator{L}

\begin{document} 

\title{Second order symmetry operators for the massive Dirac equation}
\author[S. Jacobsson]{Simon Jacobsson \orcidlink{0000-0002-1181-972X}}
%\email{simon\_jacob@outlook.com}
\ead{simon.jacobsson@kuleuven.be}
\address{Chalmers University of Technology, SE-412~96 Gothenburg, Sweden\\
KU Leuven, BE-3000 Leuven, Belgium}
\author[T. B\"ackdahl]{Thomas B\"ackdahl \orcidlink{0000-0003-3240-2445}} 
%\email{thomas.backdahl@chalmers.se}
\ead{thomas.backdahl@chalmers.se}
\address{Mathematical Sciences, Chalmers University of Technology and University of Gothenburg, SE-412~96 Gothenburg, Sweden}
\date{\today}

\section*{Abstract}

Employing the covariant language of two-spinors, we find what conditions a curved Lorentzian spacetime must satisfy for existence of a second order symmetry operator for the massive Dirac equation.
The conditions are formulated as existence of a set of Killing spinors satisfying a set of covariant linear differential equations.
Using these Killing spinors, we then state the most general form of such an operator.
Partial results for the zeroth and first order are presented and interpreted as well.
Computer algebra tools from the \emph{Mathematica} package suite \emph{xAct} were used for the calculations.

\section{Introduction}%
\label{cha:introduction}
A \emph{symmetry operator} is a linear differential operator mapping solutions to solutions of a differential equation.
Such operators can be very useful for detailed studies of the solutions. 
However, the existence of such operators is not trivial and is linked to the existence of different kinds of symmetries of the curved spacetime geometry that the differential equation is defined on. This paper aims to elucidate this for the massive Dirac equation.

Many partial differential equations from physics, such as the Schr\"{o}dinger and Helmholtz equations, lend themselves naturally to separation of variables, but also the Dirac equation has been separated in some cases. This is closely related to the existence of symmetry operators. For instance, Kalnins et al.~\cite[section 3]{KalMilWil92}, explain the separation of the Dirac equation on the Kerr spacetime in terms of the existence of symmetry operators associated with a Killing tensor by identifying a set of separation constants as eigenvalues of said symmetry operators. 

Symmetry operators also have other uses, for instance, given a conserved energy, or an energy estimate, one can easily construct higher order versions by inserting a symmetry operator. More advanced uses have also been found. Andersson and Blue~\cite{AndBlu15} used higher order symmetry operators for the scalar wave equation on the Kerr spacetime to handle the complicated trapping phenomena when proving decay estimates.

For many differential equations, a Lie derivative along a Killing vector gives a symmetry operator, i.e.\ a symmetry of the spacetime gives a symmetry operator. 
However, in many cases there are also other less obvious symmetries sometimes called hidden symmetries that can give rise to symmetry operators. In general these are described in terms of Killing spinors. An important example is a second order symmetry operator related to the \emph{Carter constant}~\cite{Car68} used by Andersson and Blue in \cite{AndBlu15}. This symmetry operator can not be built from Killing vectors. 

To know that all symmetry operators have been found, a systematic study is required. If the set of symmetry operators is not large enough, the methods described above will not give satisfactory results.

The conditions for existence of symmetry operators we present here are described as existence of a set of Killing spinors satisfying a set of covariant differential equations. This can be interpreted as conditions on the spacetime geometry.
Assuming the spacetime is a sufficiently smooth four-dimensional Lorentzian manifold that allows for a spin structure, these conditions are both necessary and sufficient.

The spin structure allow us to decompose tensorial objects into irreducible components.
Using the covariant two-spinor formalism described by Penrose and Rindler~\cite{PenRinVol1,PenRinVol2}, these decompositions are used to decompose equations into independent subequations that must be satisfied simultaneously. 

It is in general a time-consuming and nontrivial task to find these irreducible decompositions.
Thus, for this task, computer algebra systems such as the \emph{Mathematica} packages \emph{Sym\-Manipulator}~\cite{SymManipulatorWeb} and \emph{SymSpin}~\cite{SymSpinWeb} have been developed.
While there is considerable power in basic \emph{Mathematica}, \emph{SymManipulator} lets the user handle abstract symmetrized tensor expressions, and automatically decompose spinors into irreducible symmetric spinors. \emph{SymSpin} allows the user to handle complicated expressions with such spinors in an efficient way. 

\bigskip

The massive \emph{Dirac equation} is, in spinor form,
\begin{subequations}
\begin{align}
	\nabla{}^A{}_{A'} \phi_A &= m \chi{}_{A'}\,,%
	\label{eq:left_dirac_explicit}\\
	\nabla{}_A{}^{A'} \chi_{A'} &= -m \phi_A\,,%
	\label{eq:right_dirac_explicit}
\end{align}
\end{subequations}
where $\phi_A$ and $\chi_{A'}$ are spinor fields.
The mass $m$ is assumed to be nonzero.
The first result in this article is that there are no nontrivial zeroth order symmetry operators.
The second result is that there exists a first order symmetry operator if and only if there exist Killing spinors satisfying auxiliary condition A.

\begin{definition}\label{def:explicit_auxiliary_condition_A}
	Let $S_{AA'}$, $T_{A'B'}$, $U_{AB}$, and $R_{AA'}$ be Killing spinors on a Lorentzian manifold.
	They satisfy \emph{auxiliary condition A} if
	\begin{subequations}
		\begin{align}
			\nabla_{(A}{}^{A'}S_{B)A'}={}&0 \label{eq:order1AuxiliaryConditionsExplicit1},\\
			\nabla^{A}{}_{(A'}S_{|A|B')}={}&0 \label{eq:order1AuxiliaryConditionsExplicit2},\\
			\nabla_{AA'}R^{AA'}={}&0 \label{eq:order1AuxiliaryConditionsExplicit3},\\
			\nabla_{AB'}T_{A'}{}^{B'} + \nabla_{BA'}U_{A}{}^{B}={}&0 \label{eq:order1AuxiliaryConditionsExplicit4}.
		\end{align}
	\end{subequations}
\end{definition}

The third result is that there exists a second order symmetry operator if and only if there exist Killing spinors satisfying auxiliary condition B, which we will state later in \cref{def:auxiliary_condition_B} after some notation has been introduced.

\bigskip

For this article, we have used \emph{Mathematica} version 13.1.0, \emph{xTensor} version 1.2.0, \emph{Spinors} version 1.0.6, \emph{SymManipulator} version 0.9.5, \emph{SymSpin} version 0.1.1, and \emph{TexAct} version 0.4.3.
A notebook used for creating all of the results presented in the following sections are available in a GitHub repository~\cite[]{myGithubRepo}.

\subsection{Previous work}%
\label{sub:previous_work}

Michel, Radoux and \v{S}ilhan \cite{michel:radoux:silhan:2013arXiv1308.1046M} analysed the symmetry operators for the conformal wave equation.

In \cite{AndBaeBlu14a} a method was developed to find all second order symmetry operators for the conformal wave equation, the Dirac--Weyl equation, and the Maxwell equation.
Their results are also formulated as existence of a set of Killing spinors satisfying a set of covariant differential equations.
We use the same method here.
As we are dealing with a more complicated system of equations, we will however take advantage of the recent development of the \emph{SymSpin} package.

\bigskip

The conditions
\cref{eq:order1AuxiliaryConditionsExplicit1,%
eq:order1AuxiliaryConditionsExplicit2,%
eq:order1AuxiliaryConditionsExplicit3,%
eq:order1AuxiliaryConditionsExplicit4}
for the existence of a first order symmetry operator and the form of that operator, presented in \cref{thm:condition_for_1st_order_symmetry_operator_for_dirac_equation}, is a reformulation of a result by Kamran and McLenaghan~\cite[theorem II]{KamMcL84} into covariant spinor language.
Benn and Kress~\cite{BenKre03} have showed that this result is the most general one of the first order in the sense that it extends to arbitrary spin manifolds.

A special case of the second order symmetry operator presented in this article has been derived by Fels and Kamran~\cite[theorem 4.1]{FelKam90}.

Auxiliary condition A can be interpreted very geometrically.
% $S^{AA'}$ is a closed vector field and $R^{AA'}$ is a Killing vector field.
In \cref{sub:first_order_symmetry_operator:interpretation_and_discussion}, we show that \cref{eq:order1AuxiliaryConditionsExplicit4} implies the existence of a Killing--Yano tensor field.
If an operator commutes with the Dirac operator, then it is a symmetry operator, and so the set of operators commutating with the Dirac equation is a subset of the symmetry operators.
Previous work has been able to relate such operators to Killing--Yano tensors~\cite[]{Car04,CarKrtKub11}.
But also general symmetry operators have been studied in terms of Killing--Yano tensors~\cite[]{BenCha97,AckErtOndVer09}.

\section{Preliminaries}%
\label{sec:preliminaries}

In this section, the notation and concepts used in this article are presented.
Abstract index notation~\cite[chapter 2]{PenRinVol1} is used throughout and conventions are consistent with Penrose and Rindler ~\cite[]{PenRinVol1,PenRinVol2}.
Lowercase latin letters are used for Lorentzian tensor indices while uppercase latin letters are used for spinorial tensor indices, with a prime to indicate indices in the conjugate space.

\subsection{Killing tensors}%
\label{sub:killing_tensors}

A Killing vector is a vector field $\tensor{K}{^c}$ such that taking the Lie derivative of the metric with respect to it is zero, which can be written as $\tensor{\nabla}{_{(a}} \tensor{K}{_{b)}} = 0$.
The following definitions are then natural generalizations,
\begin{definition}
	A vector $\tensor{K}{^c}$ is a \emph{conformal Killing vector} if
	\begin{align}
		\tensor{\nabla}{_{(a}} \tensor{K}{_{b)}} = \lambda \tensor{g}{_a_b}
	\end{align}
	for some scalar field $\lambda$.
\end{definition}
\begin{definition}\label{def:killing_tensor}
	A totally symmetric tensor $\tensor{K}{_{b \dots q}}$ is a \emph{Killing tensor} if
	\begin{align}\label{eq:killing_tensor}
		\tensor{\nabla}{_{(a}} \tensor{K}{_{b \dots q)}} = 0.
	\end{align}
\end{definition}
\begin{definition}\label{def:killing_spinor}
	A totally symmetric spinor $\tensor{S}{_{B \dots Q}^{B' \dots Q'}}$ is a \emph{Killing spinor} if
	\begin{align}\label{eq:killing_spinor}
		\tensor{\nabla}{_{(A}^{(A'}} \tensor{S}{_{B \dots Q)}^{B' \dots Q')}} = 0.
	\end{align}
\end{definition}

Another type of geometrical quantitity of interest is Killing--Yano tensors.
They are used to construct valence $2$ Killing tensors and sometimes they are easier to find than the Killing tensors they correspond to.
\begin{definition}
	A totally antisymmetric tensor $\tensor{f}{_{b_0 \dots b_n}}$ is a \emph{Killing--Yano tensor} if
	\begin{align}\label{eq:killing-yano}
		\tensor{\nabla}{_{(a}} \tensor{f}{_{b_0) b_1 \dots b_n}} = 0.
	\end{align}
\end{definition}

Lastly for this subsection, we will define the \emph{conformally weighted Lie derivative}~\cite[(15)]{AncPoh04}, \cite[(2.5)]{AndBaeBlu14a}.
It will be used to interpret some of the terms in the symmetry operators.
\begin{definition}
	If $\tensor{\xi}{_A^{A'}}$ is a Killing vector, and $\tensor{\varphi}{_{A_1 \dots A_{k}}}$ is a totally symmetric valence $(k, 0)$ spinor, then
	\begin{align}
		\hat{\mathcal{L}}_\xi \tensor{\varphi}{_{A_1 \dots A_{k}}} = \tensor{\xi}{^{B B'}} \tensor{\nabla}{_{B B'}} \tensor{\varphi}{_{A_1 \dots A_{k}}} + \frac{k}{2} \tensor{\varphi}{_{B (A_2 \dots A_{k}}} \tensor{\nabla}{_{A_1) B'}} \tensor{\xi}{^{B B'}} + \frac{2 - k}{8} \tensor{\varphi}{_{A_1 \dots A_{k}}} \tensor{\nabla}{^{B B'}} \tensor{\xi}{_{B B'}}.
	\end{align}
	If $\varphi$ is instead of valence $(0, k)$, then $\hat{\mathcal{L}}_\xi \varphi$ is defined as $\conjugate{\hat{\mathcal{L}}_\xi \conjugate{\varphi}}$.
\end{definition}

\subsection{Decomposing spinors}%
\label{sub:decomposing_spinors}

We formulated the Dirac equation in \cref{eq:left_dirac_explicit,eq:right_dirac_explicit} using two-spinors.
Two-spinors transform under the universal covering group, $\group{SL}{2, \complexes}$, of the proper Lorentz group.
Something that greatly simplifies discussions about two-spinors is that, when working over $\group{SL}{2, \complexes}$, the only spinorial tensor that is antisymmetric in more than two indices is $0$ and the only spinorial tensor antisymmetric in two indices is the spin-metric $\tensor{\epsilon}{_A_B}$ and its multiples.
From this follows a very useful result, proved in Penrose and Rindler~\cite[proposition 3.3.54]{PenRinVol1}.
\begin{theorem}\label{thm:spinor_decomposition}
	Any spinor $T_{A_1 \dots A_p}{}^{A'_1 \dots A'_q}$ is the sum of $T_{(A_1 \dots A_p)}{}^{(A'_1 \dots A'_q)}$ and linear combinations of outer products of symmetric spinors of lower valence with spin-metrics.
\end{theorem}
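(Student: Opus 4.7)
The plan is strong induction on the total valence $n = p + q$, anchored by the two-dimensionality identity that any antisymmetric valence-two spinor is a scalar multiple of the spin-metric, so that for every $S_{AB}$ one has
\begin{align*}
S_{AB} = S_{(AB)} + S_{[AB]} = S_{(AB)} + \tfrac{1}{2}\epsilon_{AB}\epsilon^{CD}S_{CD},
\end{align*}
with a completely analogous identity for primed indices. This single identity is the only real input; everything else is combinatorics.

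For the base cases $p\leq 1$ and $q\leq 1$ there is nothing to prove, since a spinor with at most one unprimed and at most one primed index already equals its total symmetrization, and no $\epsilon$-term is needed. For the inductive step, without loss of generality assume $p \geq 2$. Applying the identity above to a chosen pair of unprimed indices $A_i, A_j$, with all other indices treated as parameters, writes $T$ as the sum of a spinor symmetric in $A_i, A_j$ and $\tfrac12 \epsilon_{A_iA_j}$ times a spinor of total valence $n-2$. The second summand already has the desired form up to applying the induction hypothesis to the lower-valence factor; the first summand has one extra pair symmetrized.

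Iterating this skew-pair extraction over all unprimed pairs, and then performing the analogous sweep on the primed indices, yields $T_{(A_1\dots A_p)}{}^{(A'_1\dots A'_q)}$ plus a sum of outer products of spin-metrics with symmetric spinors of strictly lower valence. The main obstacle is purely bookkeeping: once a pair has been symmetrized, a subsequent skew-pair extraction on a different pair can in principle disturb that earlier symmetry. The resolution is that any such disturbance is itself an antisymmetrization over two spinor indices, so it factors through yet another $\epsilon$-contraction with a spinor of strictly lower valence, which by the induction hypothesis already admits the desired decomposition. Conceptually, this combinatorics reflects the fact that the only irreducible $\group{SL}{2,\complexes}$-modules are the totally symmetric spinor modules, so every Young-projector component of a valence $(p,q)$ spinor other than the fully symmetric one must factor through a contraction with $\tensor{\epsilon}{_{AB}}$ or $\tensor{\epsilon}{_{A'B'}}$.
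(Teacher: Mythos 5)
Your proof is correct, and since the paper itself gives no argument for this theorem but simply cites Penrose and Rindler (proposition 3.3.54), your strong induction on the total valence, anchored on the fact that any antisymmetric pair of spinor indices factors through $\epsilon_{AB}$ (or $\conjugate{\epsilon}_{A'B'}$), is essentially the standard argument found in that reference. The only cosmetic improvement would be to replace the iterative ``skew-pair extraction'' bookkeeping by writing $T_{A_1 \dots A_p}{}^{A'_1 \dots A'_q} - T_{(A_1 \dots A_p)}{}^{(A'_1 \dots A'_q)}$ as an average over permutations and telescoping each permutation into transpositions, so that every term manifestly becomes an $\epsilon$ times a spinor of valence lower by two, to which the induction hypothesis applies directly.
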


As an example of this theorem, the spinorial Riemann tensor, $\tensor{R}{_A_{A'}_B_{B'}_C_{C'}_D_{D'}}$, can be decomposed as~\cite[(13.2.25)]{Wal84} 
\begin{align}
	\tensor{R}{_{A A' B B' C C' D D'}} ={}& \tensor{\Psi}{_{A B C D}} \tensor{\epsilon}{_{A' B'}} \tensor{\epsilon}{_{C' D'}}
	+ \Lambda \left( \tensor{\epsilon}{_{A C}} \tensor{\epsilon}{_{B D}} + \tensor{\epsilon}{_{B C}} \tensor{\epsilon}{_{A D}} \right) \tensor{\conjugate \epsilon}{_{A' B'}} \tensor{\conjugate \epsilon}{_{C' D'}}\nonumber\\
	&+ \tensor{\Phi}{_{A B C' D'}} \tensor{\conjugate \epsilon}{_{A' B'}} \tensor{\epsilon}{_{C D}} + \textrm{complex conjugate}.%
	\label{eq:spinorial_riemann_tensor_decomposed_furthest}
\end{align}
$\tensor{\Psi}{_{A B C D}} = \frac{1}{4} \tensor{R}{_{(A B C D)}_{X'}^{X'}_{Y'}^{Y'}}$ is the \emph{Weyl spinor}, $\Lambda = \frac{1}{24} \tensor{R}{_X^Y_{X'}^{X'}_Y^X_{Y'}^{Y'}}$ is the \emph{Ricci scalar}, and $\tensor{\Phi}{_{A B C' D'}} = \frac{1}{4} \tensor{R}{_{(A B) X'}^{X'}_X^X_{(C' D')}}$ is the \emph{Ricci spinor}.

\subsection{Index-free notation}%
\label{sub:index_free_notation}

\Cref{thm:spinor_decomposition} allows us to decompose spinors into sums of outer products of symmetric spinors and $\epsilon$:s, but if an expression is symmetric in all of its free indices, then, after applying \cref{thm:spinor_decomposition}, every $\epsilon$ will have at least one index contracted.
So the expression may be written only in terms of partially contracted outer products of symmetric spinors.
If two symmetric spinors are multiplied and partially contracted, it does not matter \emph{which} indices are contracted, only how many.

Hence the following definition.
\begin{definition}[\cite{SymSpinArticle} definition 1]
	Let $\tensor{T}{_{A_1 \dots A_k}^{A_1' \dots A_l'}}$ and $\tensor{S}{_{A_1 \dots A_n}^{A_1' \dots A_n'}}$ be totally symmetric spinors.
	Then the \emph{symmetric $pq$-multiplication} of $\tensor{T}{_{A_1 \dots A_k}^{A_1' \dots A_l'}}$ with $\tensor{S}{_{A_1 \dots A_n}^{A_1' \dots A_n'}}$ is the totally symmetrized outer product of $T$ and $S$ where $p$ unprimed indices are contracted and $q$ primed indices are contracted:
	\begin{align}
		&\tensor{(T \SymMult{p}{q} S)}{_{A_1}_\dots_{A_{k + m - 2p}}^{A'_1}^\dots^{A'_{l + n - 2q}}}\nonumber\\
		&=
		\tensor{T}{_{(A_1}_\dots_{A_{k - p}}^{B_1}^\dots^{B_p}^{(A'_1}^\dots^{A'_{l - q}}^{|B'_1 \dots B'_{q}|}}
		\tensor{S}{_{A_{k - p + 1}}_\dots_{A_{k + m - 2 p})}_{B_1}_\dots_{B_p}^{A'_{l - q + 1}}^\dots^{A'_{l + n - 2q})}_{B'_1 \dots B'_{q}}}.
	\end{align}
\end{definition}

With this operator, we don't need to write out the indices in partially contracted outer products of symmetric spinors.
We will call this \emph{index-free notation}.

\subsection{Fundamental derivatives}%
\label{sub:fundamental_operators}

Another application of \cref{thm:spinor_decomposition} is to the covariant spinor derivative of a totally symmetric spinor.
Such an expression has four irreducible parts and we will name them as follows.
\begin{definition}[\cite{AndBaeBlu14a} definition 13]\label{def:fundamental_derivatives}
	Let $\mathcal{S}_{k, l}$ denote the space of smooth symmetric spinor fields of valence $(k, l)$ and let $\psi{}_{A_1 \dots A_k}{}^{A'_1 \dots A'_l} \in \mathcal{S}_{k, l}$.
	Then there are four \emph{fundamental derivatives}: the \emph{divergence} $\sDiv \from \mathcal{S}_{k, l} \to \mathcal{S}_{k - 1, l - 1}$ which acts by
	\begin{subequations}
	\begin{align}
		\tensor{(\sDiv \psi)}{_{A_1}_\dots_{A_{k-1}}^{A'_1}^\dots^{A'_{l-1}}} &= \tensor{\nabla}{^B^{B'}} \tensor{\psi}{_{A_1}_\dots_{A_{k-1}}_B^{A'_1}^\dots^{A'_{l-1}}_{B'}}
		\quad \text{for} \quad k \geq 1, l \geq 1,
	\end{align}
	the \emph{curl} $\sCurl \from \mathcal{S}_{k, l} \to \mathcal{S}_{k-1, l+1}$ which acts by
	\begin{align}
		\tensor{(\sCurl \psi)}{_{A_1}_\dots_{A_{k+1}}^{A'_1}^\dots^{A'_{l-1}}} &= \tensor{\nabla}{_{(A_1}^{B'}} \tensor{\psi}{_{A_2}_\dots_{A_{k+1})}^{A'_1}^\dots^{A'_{l-1}}_{B'}}
		\quad \text{for} \quad k \geq 0, l \geq 1,
	\end{align}
	the \emph{curl-dagger} $\sCurlDagger \from \mathcal{S}_{k, l} \to \mathcal{S}_{k+1, l-1}$ which acts by
	\begin{align}
		\tensor{(\sCurlDagger \psi)}{_{A_1}_\dots_{A_{k-1}}^{A'_1}^\dots^{A'_{l+1}}} &= \tensor{\nabla}{^B^{(A'_1}} \tensor{\psi}{_{A_1}_\dots_{A_{k-1}}_{B}^{A'_2}^\dots^{A'_{l+1})}}
		\quad \text{for} \quad k \geq 1, l \geq 0,
	\end{align}
	and the \emph{twist} $\sTwist \from \mathcal{S}_{k, l} \to \mathcal{S}_{k+1, l+1}$ which acts by
	\begin{align}
		\tensor{(\sTwist \psi)}{_{A_1}_\dots_{A_{k+1}}^{A'_1}^\dots^{A'_{l+1}}} &= \tensor{\nabla}{_{(A_1}^{(A'_1}} \tensor{\psi}{_{A_2}_\dots_{A_{k+1})}^{A'_2}^\dots^{A'_{l+1})}}
		\quad \text{for} \quad k \geq 0, l \geq 0.
	\end{align}
	\end{subequations}
\end{definition}
To make precise the statement that the fundamental derivatives are the irreducible parts of the spinor derivative, we state the following lemma.
\begin{lemma}[\cite{AndBaeBlu14a} lemma 15]\label{lemma:spinor_derivative_decomposition}
	Let $\tensor{\psi}{_{A_1 \dots A_k}^{A'_1 \dots A'_l}}$ be totally symmetric. Then
	\begin{align}
		\tensor{\nabla}{_{A_1}^{A'_1}} \tensor{\psi}{_{A_2}_\dots_{A_{k+1}}^{A'_2}^\dots^{A'_{l+1}}} ={}& \tensor{(\sTwist \psi)}{_{A_1}_\dots_{A_{k+1}}^{A'_1}^\dots^{A'_{l+1}}}\nonumber\\
		&- \frac{l}{l+1} \tensor{\conjugate \epsilon}{^{A'_1}^{(A'_2}} \tensor{(\sCurl \psi)}{_{A_1}_\dots_{A_{k+1}}^{A'_3}^\dots^{A'_{l+1})}}\nonumber\\
		&- \frac{k}{k+1} \tensor{\epsilon}{_{A_1}_{(A_2}} \tensor{(\sCurlDagger \psi)}{_{A_3}_\dots_{A_{k+1})}^{A'_1}^\dots^{A'_{k+1}}}\nonumber\\
		&+ \frac{kl}{(k + 1) (l + 1)} \tensor{\epsilon}{_{A_1}_{(A_2}} \tensor{\conjugate \epsilon}{^{A'_1}^{(A'_2}} \tensor{(\sDiv \psi)}{_{A_3}_\dots_{A_{k+1})}^{A'_3}^\dots^{A'_{l+1})}}.%
		\label{eq:covariant_derivative_of_symmetric_spinor_decomposition}
	\end{align}
\end{lemma}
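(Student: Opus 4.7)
The plan is to apply \cref{thm:spinor_decomposition} to the spinor $T_{A_1 \ldots A_{k+1}}{}^{A'_1 \ldots A'_{l+1}} \definedas \nabla_{A_1}{}^{A'_1} \psi_{A_2 \ldots A_{k+1}}{}^{A'_2 \ldots A'_{l+1}}$, taking advantage of the fact that $\psi$ is already totally symmetric in $A_2,\ldots,A_{k+1}$ and in $A'_2,\ldots,A'_{l+1}$. Consequently, when $T$ is decomposed into a totally symmetric part plus $\epsilon$-traces, the only possible antisymmetries involve the new index $A_1$ paired with some $A_j$, or $A'_1$ paired with some $A'_j$. Using the elementary identity $X_{AB} = X_{(AB)} + \tfrac{1}{2}\epsilon_{AB} X^C{}_C$ on the pairs $(A_1,A_j)$ and $(A'_1,A'_j)$, one obtains an ansatz of the form stated in the lemma, namely a totally symmetric term plus an $\epsilon_{A_1(A_2}(\cdot)$ term, a $\conjugate\epsilon^{A'_1(A'_2}(\cdot)$ term, and a mixed term, with the interior spinors symmetric on all remaining indices.

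Next, I would identify the coefficients one by one. The totally symmetric part is by definition $(\sTwist\psi)$. To extract the other three, contract the ansatz successively with $\epsilon^{A_1 A_2}$, with $\conjugate\epsilon_{A'_1 A'_2}$, and with both. Each contraction annihilates the fully symmetric piece; applying the standard identities such as $\epsilon^{A_1 A_2}\epsilon_{A_1(A_2} Y_{A_3\ldots A_{k+1})} = \tfrac{k}{k+1} Y_{A_3\ldots A_{k+1}}$ and exploiting the total symmetry of $\psi$ in its remaining indices reduces each contracted left-hand side to $(\sCurlDagger\psi)$, $(\sCurl\psi)$, and $(\sDiv\psi)$ respectively, with explicit numerical factors. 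Solving the resulting linear relations fixes the coefficients to be precisely $-k/(k+1)$, $-l/(l+1)$, and $kl/((k+1)(l+1))$, matching \cref{eq:covariant_derivative_of_symmetric_spinor_decomposition}.

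The main obstacle is purely combinatorial bookkeeping: ensuring that the mixed $\epsilon\,\conjugate\epsilon$ piece is not miscounted when one extracts it from the partially contracted expressions (since it contributes both to the $\epsilon^{A_1 A_2}$ and to the $\conjugate\epsilon_{A'_1 A'_2}$ traces), and that the normalization factors coming from repeated symmetrization over $k+1$ unprimed or $l+1$ primed indices are tracked correctly. A clean way to sidestep potential errors is to verify the final identity by checking that both sides agree after contraction with an arbitrary totally symmetric test spinor of the appropriate valence, in which case the identity reduces to a scalar linear relation in four known quantities and is straightforward to validate. All the underlying algebra is elementary $\group{SL}{2,\complexes}$-representation theory, so no deeper geometric input is required beyond \cref{thm:spinor_decomposition} and \cref{def:fundamental_derivatives}.
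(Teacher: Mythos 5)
Your proposal is sound and is essentially the argument behind the cited result: the paper itself does not prove this lemma but imports it from Andersson--B\"ackdahl--Blue (lemma 15), where it is obtained exactly as you describe, by decomposing $\nabla_{A_1}{}^{A'_1}\psi_{A_2\dots A_{k+1}}{}^{A'_2\dots A'_{l+1}}$ into its four irreducible parts (using that $\psi$ is already totally symmetric, so the only traces are those involving $A_1$ or $A'_1$) and then fixing the coefficients by contracting with $\epsilon^{A_1 A_2}$ and $\conjugate\epsilon_{A'_1 A'_2}$. One small correction to your illustrative identity: $\epsilon^{A_1 A_2}\epsilon_{A_1(A_2}Y_{A_3\dots A_{k+1})}=\tfrac{k+1}{k}\,Y_{A_3\dots A_{k+1}}$ (check $k=1$, which gives $2Y$), not $\tfrac{k}{k+1}$; with the correct factor your trace-extraction scheme reproduces exactly the coefficients in \cref{eq:covariant_derivative_of_symmetric_spinor_decomposition}.
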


The spinorial Bianchi identity may be formulated in terms of fundamental derivatives.
\begin{lemma}\label{lemma:bianchi_identity}
	The Bianchi identity for the spinorial Riemann tensor is
	\begin{subequations}
	\begin{align}
		\tensor{(\sDiv \Phi)}{_A^{A'}} ={}& -3 \tensor{(\sTwist \Lambda)}{_A^{A'}},\\
		\tensor{(\sCurlDagger \Psi)}{_A_B_C^{A'}} ={}& \tensor{(\sCurl \Phi)}{_A_B_C^{A'}}.
	\end{align}
	\end{subequations}
\end{lemma}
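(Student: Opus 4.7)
The plan is to derive both identities directly from the standard second Bianchi identity $\nabla_{[a} R_{bc]de} = 0$, substituting the spinor decomposition of the Riemann tensor given in (2.10) and then extracting the two stated equations by projecting onto irreducible parts. The underlying idea is that the second Bianchi identity is a single tensorial relation, but once decomposed in spinors via \cref{thm:spinor_decomposition}, it splits into precisely the independent pieces involving $\Psi_{ABCD}$, $\Phi_{ABA'B'}$ and $\Lambda$.

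First, I would rewrite the Bianchi identity in the once-contracted form $\nabla^{a} R_{abcd} = \nabla_{c} R_{bd} - \nabla_{d} R_{bc}$ and the twice-contracted form $\nabla^{a} R_{ab} = \tfrac{1}{2} \nabla_{b} R$, since these two encode the same content as the original identity once the first Bianchi identity is used, and they have straightforward spinor translations. Substituting the decomposition (2.10) into the left-hand sides, the $\Psi$-contribution projects into $\nabla^{AA'} \Psi_{ABCD}$ with the indices $B,C,D$ free and symmetric, while the $\Phi$-contribution produces $\nabla_{A}{}^{B'} \Phi_{BCA'B'}$ with various symmetrizations, and the $\Lambda$-contribution produces $\nabla_{AA'} \Lambda$ times $\epsilon$-factors.

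Then, I would apply \cref{lemma:spinor_derivative_decomposition} to each of $\nabla \Psi_{ABCD}$, $\nabla \Phi_{ABA'B'}$ and $\nabla \Lambda$, which expresses the uncontracted spinor derivatives of these symmetric spinors as sums over the four fundamental derivatives. Matching the irreducible parts on both sides of the contracted Bianchi identities term by term then identifies the totally symmetric piece of $\nabla \Psi$ against the totally symmetric piece of $\nabla \Phi$ (giving $\sCurlDagger \Psi = \sCurl \Phi$), and the partially contracted piece of $\nabla \Phi$ against $\nabla \Lambda$ (giving $\sDiv \Phi = -3\,\sTwist \Lambda$), the factor $-3$ arising from the three-index combinatorial factor of the Ricci spinor.

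The main obstacle is bookkeeping: tracking the combinatorial coefficients coming from the symmetrizations and $\epsilon$-contractions when splitting each Riemann-tensor spinor term into its irreducible components. In particular, the $\Lambda$-piece is multiplied by two $\epsilon$-factors and contributes in several projections, so one must be careful to collect all such contributions before reading off the coefficient $-3$. This is precisely the kind of routine but error-prone manipulation for which \emph{SymSpin} is designed, and where a hand-computation would proceed by working exclusively at the level of symmetric spinors and partial contractions to avoid a proliferation of indices.
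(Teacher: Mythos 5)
Your strategy is sound, and it is worth noting at the outset that the paper itself offers no proof of this lemma: it is quoted as the standard spinorial form of the second Bianchi identity, which in the Penrose--Rindler conventions used throughout (their equations (4.10.7)--(4.10.8)) is exactly the pair $\sCurlDagger \Psi = \sCurl \Phi$ and $\sDiv \Phi = -3\, \sTwist \Lambda$. Your derivation --- substitute the decomposition \cref{eq:spinorial_riemann_tensor_decomposed_furthest} into the differential Bianchi identity, expand the spinor derivatives of $\Psi$, $\Phi$, $\Lambda$ via \cref{lemma:spinor_derivative_decomposition}, and read off the irreducible valence $(3,1)$ and $(1,1)$ parts --- is precisely the standard textbook route, so there is no conflict with the paper.

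Two small points of caution. First, your claim that the once- and twice-contracted Bianchi identities ``encode the same content as the original identity'' is stronger than what you need and is itself a nontrivial, dimension-dependent fact (it holds in four dimensions, essentially a result of Lanczos); for the lemma you only need the trivial direction, namely that the full identity $\nabla_{[a}R_{bc]de}=0$ implies its contractions, since the totally symmetric $(3,1)$ projection of the once-contracted identity yields $\sCurlDagger\Psi=\sCurl\Phi$ and the twice-contracted (divergence-free Einstein tensor) identity yields $\sDiv\Phi=-3\,\sTwist\Lambda$. Either drop the equivalence claim or work directly with the uncontracted identity as Penrose and Rindler do. Second, the factor $-3$ does not come from a ``three-index combinatorial factor of the Ricci spinor'' but from the trace normalizations in the decomposition, $R_{ab}=2\Phi_{ab}+6\Lambda g_{ab}$ and $R=24\Lambda$, so that $\nabla^{a}G_{ab}=0$ becomes $\nabla^{AA'}\Phi_{ABA'B'}\propto 3\nabla_{BB'}\Lambda$, with the overall sign fixed by the $\epsilon$-raising conventions. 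With those adjustments the bookkeeping you describe goes through as expected.
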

We will use this identity along with its complex conjugate to simplify and canonicalize expressions containing derivatives of the spinorial Riemann tensor.

Another observation that will later form the bridge between spinor algebra and spacetime geometry is that \cref{def:killing_spinor} may be reformulated as
\begin{proposition}\label{prop:killing_spinor_condition}
	A totally symmetric valence $(k, l)$ spinor $\tensor{\psi}{_B_\dots_Q^{B'}^\dots^{Q'}}$ is a Killing spinor if and only if
	\begin{align}
		\tensor{(\sTwist S)}{_A_B_\dots_Q^{A'}^{B'}^\dots^{Q'}} = 0.
	\end{align}
\end{proposition}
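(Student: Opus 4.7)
The plan is to observe that this proposition is essentially a direct unpacking of definitions. The Killing spinor condition from \cref{def:killing_spinor} requires that
\begin{equation*}
\nabla_{(A}{}^{(A'} \psi_{B \dots Q)}{}^{B' \dots Q')} = 0,
\end{equation*}
where the symmetrization runs over all unprimed indices $A, B, \dots, Q$ and independently over all primed indices $A', B', \dots, Q'$. On the other hand, the twist operator from \cref{def:fundamental_derivatives} applied to a totally symmetric $\psi \in \mathcal{S}_{k,l}$ is defined as
\begin{equation*}
(\sTwist \psi)_{A_1 \dots A_{k+1}}{}^{A'_1 \dots A'_{l+1}} = \nabla_{(A_1}{}^{(A'_1} \psi_{A_2 \dots A_{k+1})}{}^{A'_2 \dots A'_{l+1})},
\end{equation*}
which is the same object under a relabeling $A_1 \mapsto A$, $A_2 \mapsto B$, \dots, $A_{k+1} \mapsto Q$ and similarly for primed indices.

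The one point that deserves a line of justification is that the symmetrizations really do coincide. In the twist, the outer symmetrization is applied to the expression $\nabla_{A_1}{}^{A'_1} \psi_{A_2 \dots A_{k+1}}{}^{A'_2 \dots A'_{l+1}}$, where $\psi$ is already totally symmetric in its own unprimed indices and in its own primed indices. Since symmetrizing an already-symmetric subset of indices is idempotent, the total symmetrization over $A_1, \dots, A_{k+1}$ of this expression coincides with the symmetrization written in \cref{def:killing_spinor}, and similarly for the primed slot. Both conditions therefore reduce to the vanishing of the same totally symmetrized spinor.

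There is no real obstacle here; the statement merely recasts the Killing spinor equation in the index-free fundamental-derivative language of \cref{sub:fundamental_operators}, so that later auxiliary conditions can be stated uniformly in terms of $\sDiv$, $\sTwist$, $\sCurl$, and $\sCurlDagger$. The ``if and only if'' follows immediately in both directions from the identity of the two expressions.
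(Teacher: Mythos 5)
Your proposal is correct and matches the paper's treatment: the paper states \cref{prop:killing_spinor_condition} without proof precisely because, as you observe, the twist in \cref{def:fundamental_derivatives} is literally the symmetrized derivative appearing in \cref{def:killing_spinor}, so the equivalence is definitional. Your extra remark about idempotence of symmetrization over an already-symmetric $\psi$ is a fine (if unneeded) clarification.
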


\subsection{Commutators of fundamental derivatives}%
\label{sub:commutators}

\begin{definition}
	The \emph{spinor box operators} are
	\begin{subequations}
	\begin{align}
		\tensor{\square}{_A_B} &= \tensor{\nabla}{_{(A}_{|A'|}} \tensor{\nabla}{_{B)}^{A'}},\\
		\tensor{\conjugate\square}{_{A'}_{B'}} &= \tensor{\nabla}{_{(A}_{|A'|}} \tensor{\nabla}{^A_{B)}}.
	\end{align}
	\end{subequations}
\end{definition}

Note that both are contractions of the expression
\begin{align*}
	\tensor{\nabla}{_A_{A'}} \tensor{\nabla}{_B_{B'}} - \tensor{\nabla}{_B_{B'}} \tensor{\nabla}{_A_{A'}}.
\end{align*}
Hence any box operator acting on a spinor may be re-expressed as some partial contraction between that spinor and the spinorial Riemann tensor.
Importantly, the spinor box operators can be rewritten to be order $0$ in derivative:
\begin{lemma}
	Let $\psi$ be a valence $(k, l)$ spinor.
	Then
	\begin{subequations}
	\begin{align}
		\square \SymMult{0}{0} \psi ={}& -k \Psi \SymMult{1}{0} \psi - l \Phi \SymMult{0}{1} \psi,\\
		\square \SymMult{1}{0} \psi ={}& -(k - 1) \Psi \SymMult{2}{0} \psi - l \Phi \SymMult{1}{1} \psi - (k + 2) \Lambda \SymMult{0}{0} \psi,\\
		\square \SymMult{2}{0} \psi ={}& -(k - 2) \Psi \SymMult{3}{0} \psi - l \Phi \SymMult{2}{1} \psi.
	\end{align}
	\end{subequations}
\end{lemma}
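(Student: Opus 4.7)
The plan is to reduce all three identities to the spinor Ricci identity acting on a single spinor index, then symmetrize and contract according to each $\SymMult{p}{0}$ pattern while carefully tracking combinatorial factors. First I would record the elementary actions of $\square_{AB}$ on one index. Unpacking $\square_{AB}$ as a partial trace of $[\nabla_{AA'}, \nabla_{BB'}]$ and substituting the Riemann decomposition in \cref{eq:spinorial_riemann_tensor_decomposed_furthest}, one obtains formulas of the schematic shape $\square_{AB}\xi_C = \alpha\,\Psi_{ABCD}\xi^D + \beta\,\Lambda\,\epsilon_{C(A}\xi_{B)}$ on an unprimed index and $\square_{AB}\eta^{C'} = \gamma\,\Phi_{AB}{}^{C'}{}_{D'}\eta^{D'}$ on a primed index, for fixed universal constants. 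Because $\square_{AB}$ is a commutator of derivations, its action on $\psi_{A_1\dots A_k}{}^{A'_1\dots A'_l}$ is the sum of $k$ copies of the unprimed action (one per $A_i$) plus $l$ copies of the primed action (one per $A'_j$).

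Next, I would assemble each identity by applying the appropriate symmetric multiplication to this raw expression. For $\square\SymMult{0}{0}\psi$, the free indices $A, B$ of $\square_{AB}$ are symmetrized with all the $A_i$; the $\Lambda$ term $\epsilon_{C(A}\psi_{B)A_2\dots}$ then drops out because $\epsilon$ is antisymmetric in two slots that both sit inside a symmetric group, leaving only the $\Psi$ and $\Phi$ pieces, which fold into $-k\,\Psi\SymMult{1}{0}\psi - l\,\Phi\SymMult{0}{1}\psi$ after using the total symmetry of $\Psi_{ABCD}$ and $\Phi_{ABC'D'}$. For $\square\SymMult{1}{0}\psi$, one index of $\square_{AB}$ is contracted against an index of $\psi$ before symmetrization; the $\Lambda$ term now survives because the contraction breaks the antisymmetry-in-symmetrization cancellation, and it contributes the $-(k+2)\Lambda\SymMult{0}{0}\psi$ piece, while the $\Psi$ and $\Phi$ pieces regroup with shifted coefficients. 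For $\square\SymMult{2}{0}\psi$, both $A$ and $B$ are contracted with $\psi$, which once more forces the $\Lambda$ piece to vanish via an antisymmetric $\epsilon_{AB}$ sitting in a fully symmetric slot.

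The main obstacle is the coefficient bookkeeping behind the integers $-k$, $-(k-1)$, $-(k-2)$, and $-(k+2)$, and the plain $-l$ on every $\Phi$ term. These emerge from three sources: the sum over which index of $\psi$ is hit by the Ricci identity, the normalization weights implicit in the symmetrizers of $\SymMult{p}{q}$, and internal reshuffles when a contracted $\square$-index lands on a $\Psi$- or $\epsilon$-factor, which can either produce a lower-valence $\Psi\SymMult{p+1}{0}\psi$ term or generate an extra $\Lambda$ contribution. I would perform this count once by hand using explicit symmetrizers on a generic symmetric $\psi$, and then cross-check against the \emph{SymSpin} implementation used elsewhere in the paper, since a missed symmetrization factor is the most likely source of error.
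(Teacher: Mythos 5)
Your route is the standard one, and it is in effect the only justification the paper itself offers: the paper states this lemma without a written proof, relying on the preceding remark that the box operators are contractions of $\nabla_{AA'}\nabla_{BB'}-\nabla_{BB'}\nabla_{AA'}$ (hence algebraic in the curvature spinors of \cref{eq:spinorial_riemann_tensor_decomposed_furthest}), with the explicit coefficients produced and checked in the accompanying \emph{Mathematica}/xAct notebook. Your plan — elementary action of $\square_{AB}$ on a single index, extension to $\psi$ by the derivation property, then symmetrization/contraction according to $\SymMult{0}{0}$, $\SymMult{1}{0}$, $\SymMult{2}{0}$ — is the by-hand version of exactly that, and your structural analysis is sound: the $\Lambda$ piece dies under total symmetrization, survives only in the once-contracted case, and the $\epsilon$-versus-symmetry and trace cancellations you invoke are the correct mechanisms.

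The gap is that, as written, nothing quantitative is actually established. The ``universal constants'' $\alpha,\beta,\gamma$ are not free parameters to be fixed later; they are the Penrose--Rindler spinor Ricci identities, in these conventions
\begin{align}
	\square_{AB}\,\kappa_{C} ={}& \Psi_{ABCD}\,\kappa^{D} - 2\Lambda\,\kappa_{(A}\epsilon_{B)C}\,,
	&
	\square_{AB}\,\tau_{C'} ={}& \Phi_{ABC'D'}\,\tau^{D'}\,,
\end{align}
and without quoting or deriving them (including the sign flips incurred when indices are raised with $\epsilon$ inside $\SymMult{p}{q}$) you cannot even check the overall minus signs in the lemma, let alone the coefficients $-k$, $-(k-1)$, $-(k-2)$, $-(k+2)$ and $-l$, which \emph{are} the content of the statement. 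In particular, the drop from $k$ to $k-1$ to $k-2$ in the Weyl terms comes from the total symmetry (tracelessness) of $\Psi_{ABCD}$ killing the contributions where the box hits a contracted index, and the $-(k+2)\Lambda$ term is the sum of a $k$-proportional piece and a fixed piece from the $\epsilon$-term evaluated on the contracted slot; your sketch only gestures at these via ``internal reshuffles.'' Deferring the count to ``cross-check against \emph{SymSpin}'' is not a proof step — it is precisely the verification the paper already performs — so to turn the proposal into a proof you must carry out the symmetrizer bookkeeping explicitly, e.g.\ by writing $(\square\SymMult{p}{0}\psi)$ with explicit symmetrizers for general $(k,l)$ and summing the $k+l$ single-index contributions.
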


Box operators appear when commuting fundamental derivatives.
\begin{lemma}[\cite{AndBaeBlu14a} lemma 18]
	Let $\psi$ be a valence $(k, l)$ spinor.
	Then the fundamental derivatives satisfy the following relations
	\begingroup
	\allowdisplaybreaks
	\begin{subequations} 
	\begin{align}
		\sDiv \sCurl \psi ={}&
		\frac{k}{k + 1} \sCurl \sDiv \psi
		- \conjugate\square \SymMult{0}{2} \psi,%
		\label{eq:div_curl_to_curl_div}\\
		\sDiv \sCurlDagger \psi ={}&
		\frac{k}{l + 1} \sCurlDagger \sDiv \psi
		- \square \SymMult{2}{0} \psi%
		\label{eq:div_curldagger_to_curldagger_div}\\
		\sCurl \sTwist \psi ={}&
		\frac{l}{l + 1} \sTwist \sCurl \psi
		- \square \SymMult{0}{0} \psi%
		\label{eq:curl_twist_to_twist_curl}\\
		\sCurlDagger \sTwist \psi ={}&
		\frac{k}{k + 1} \sTwist \sCurlDagger \psi
		- \conjugate\square \SymMult{0}{0} \psi%
		\label{eq:curldagger_twist_to_twist_curldagger}\\
		\sDiv \sTwist \psi ={}&
		-\left( \frac{1}{k + 1} + \frac{1}{l + 1} \right) \sCurl \sCurlDagger \psi
		+ \frac{l (l + 2)}{(l + 1)^2} \sTwist \sDiv \psi
		- \frac{l + 2}{l + 1} \square \SymMult{1}{0} \psi
		- \frac{l}{l + 1} \conjugate\square \SymMult{0}{1} \psi%
		\label{eq:div_twist_to_curl_curldagger_and_twist_div}\\
		\sDiv \sTwist \psi ={}&
		-\left( \frac{1}{k + 1} + \frac{1}{l + 1} \right) \sCurlDagger \sCurl \psi
		+ \frac{k (k + 2)}{(k + 1)^2} \sTwist \sDiv \psi
		- \frac{k + 2}{k + 1} \conjugate\square \SymMult{0}{1} \psi
		- \frac{k}{k + 1} \square \SymMult{1}{0} \psi%
		\label{eq:div_twist_to_curldagger_curl_and_twist_div}\\
		\sCurl \sCurlDagger \psi ={}&
		\sCurlDagger \sCurl \psi
		+ \left( \frac{1}{k + 1} - \frac{1}{l + 1} \right) \sTwist \sDiv \psi
		- \square \SymMult{1}{0} \psi
		+ \conjugate\square \SymMult{0}{1} \psi%
		\label{eq:curl_curldagger_to_curldagger_curl_and_twist_div}
	\end{align}
	\end{subequations}
	\endgroup
\end{lemma}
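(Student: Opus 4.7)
The plan is to verify each identity by unfolding both sides into a canonical representation of the second-order differential operator acting on $\psi$, then reducing the antisymmetric part of that operator to curvature via the spinor box operators and the preceding lemma. The main tool throughout is \cref{lemma:spinor_derivative_decomposition}, which lets me convert between compositions of fundamental derivatives and raw partial contractions of $\nabla\nabla\psi$ against $\epsilon$-factors; the geometric content of every identity in the lemma is ultimately that $\nabla_{AA'}\nabla_{BB'}$ has a well-understood antisymmetric part given by the spinor box operators.

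Concretely, for each relation $\mathcal{O}_1 \mathcal{O}_2 \psi = \sum_i c_i \mathcal{O}'_i \mathcal{O}''_i \psi + (\text{box terms})$, I would proceed in three steps. First, starting from $\mathcal{O}_1 \mathcal{O}_2 \psi$, I write $\mathcal{O}_2 \psi$ as a symmetrized partial contraction of $\nabla \psi$ against $\epsilon$-factors using \cref{def:fundamental_derivatives}, then apply the outer $\nabla$ to obtain a symmetrized contraction of $\nabla_{AA'}\nabla_{BB'}\psi$. Second, I commute the outer $\nabla$ through the symmetrizer of the inner one so that both derivatives act directly on $\psi$; the antisymmetric part of $\nabla_{AA'}\nabla_{BB'}$ is a multiple of $\epsilon_{AB}\conjugate\square_{A'B'} + \conjugate\epsilon_{A'B'}\square_{AB}$, which by the previous lemma is algebraic in $\Psi$, $\Phi$, $\Lambda$, and $\psi$ and accounts for the box terms on the right. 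Third, I re-identify the surviving symmetric double-derivative term as a sum of compositions of fundamental derivatives on $\psi$, again by \cref{lemma:spinor_derivative_decomposition}; the rational weights on the right-hand sides are forced by the valence-dependent coefficients in \cref{eq:covariant_derivative_of_symmetric_spinor_decomposition}.

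The hard part is not conceptual but combinatorial: each identity requires tracking symmetrizations over up to $k+l+2$ indices, and the precise rational prefactors such as $l(l+2)/(l+1)^2$ in \cref{eq:div_twist_to_curl_curldagger_and_twist_div} emerge only after combining several Young-symmetrizer-like identities on $\epsilon$-factors. I would therefore carry out the bookkeeping inside the \emph{SymSpin} package, which canonicalizes symmetric-spinor products in the $\SymMult{p}{q}$ notation and generates those prefactors mechanically. Two useful internal consistency checks are available: \cref{eq:curl_curldagger_to_curldagger_curl_and_twist_div} must be recoverable as the difference of \cref{eq:div_twist_to_curl_curldagger_and_twist_div,eq:div_twist_to_curldagger_curl_and_twist_div}, which pins down the signs and coefficients of both, and \cref{eq:div_curl_to_curl_div,eq:div_curldagger_to_curldagger_div} are exchanged under complex conjugation, which fixes them jointly.
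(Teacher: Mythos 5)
Your plan is essentially the correct and standard argument; note that the paper itself does not prove this lemma at all, it only cites \cite{AndBaeBlu14a} (lemma 18), and the proof there is the one you sketch: apply \cref{lemma:spinor_derivative_decomposition} twice so each composition of fundamental derivatives becomes a partial contraction of $\nabla_{AA'}\nabla_{BB'}\psi$ with $\epsilon$-factors, split the pair of derivatives into its symmetric part (which reassembles, with the valence-dependent weights of \cref{eq:covariant_derivative_of_symmetric_spinor_decomposition}, into the compositions on the right) and its antisymmetric part $\epsilon_{AB}\conjugate\square_{A'B'}+\conjugate\epsilon_{A'B'}\square_{AB}$, which yields the box terms. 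One small simplification: for this lemma you do not need the preceding lemma expressing boxes through $\Psi$, $\Phi$, $\Lambda$ — the right-hand sides keep the terms as $\square\SymMult{p}{q}\psi$ and $\conjugate\square\SymMult{p}{q}\psi$, so no curvature expansion is required. Your consistency checks are genuine and worth running: subtracting \cref{eq:div_twist_to_curldagger_curl_and_twist_div} from \cref{eq:div_twist_to_curl_curldagger_and_twist_div} and dividing by $\tfrac{1}{k+1}+\tfrac{1}{l+1}$ does reproduce \cref{eq:curl_curldagger_to_curldagger_curl_and_twist_div}, and the conjugation symmetry you invoke between \cref{eq:div_curl_to_curl_div} and \cref{eq:div_curldagger_to_curldagger_div} actually exposes a transcription slip in the statement as printed here: the coefficient in \cref{eq:div_curldagger_to_curldagger_div} should be $\tfrac{l}{l+1}$ rather than $\tfrac{k}{l+1}$ (a flat-space test with $(k,l)=(2,1)$ gives $\tfrac{1}{2}$, and the cited source has the conjugation-symmetric form). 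The only reservation is that your write-up is a proof plan rather than a proof: the rational prefactors are delegated to \emph{SymSpin}, which is consistent with the paper's own methodology, but the identities are only established once that bookkeeping is actually carried out.
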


\subsection{Leibniz rules for fundamental derivatives}%
\label{sub:leibniz_rules_for_fundamental_derivatives}

The following lemma is formulated and proved by Aksteiner and B\"{a}ckdahl.
\begin{lemma}[\cite{SymSpinArticle} lemma 10]\label{lemma:symmult_liebniz_rules}
For symmetric spinors $\phi \in \SymSpace_{i,j}, \varphi \in \mathcal{S}_{k,l}$ we have the following Leibniz rules.
{\allowdisplaybreaks%
\begin{subequations}
\label{eq:SymMultLeibniz}
\begin{align}%
	\sTwist (\phi {\overset{m,n}{\odot}}\varphi)={}&(-1)^{m + n}\varphi {\overset{m,n}{\odot}}\sTwist \phi
	 + \tfrac{(-1)^{m + n} n}{j + 1}\varphi {\overset{m,n - 1}{\odot}}\sCurl \phi
	 + \tfrac{(-1)^{m + n} m}{i + 1}\varphi {\overset{m - 1,n}{\odot}}\sCurlDagger \phi\nonumber\\*
	& + \tfrac{(-1)^{m + n} m n}{(i + 1) (j + 1)}\varphi {\overset{m - 1,n - 1}{\odot}}\sDiv \phi
	 + \phi {\overset{m,n}{\odot}}\sTwist \varphi
	 + \tfrac{n}{l + 1}\phi {\overset{m,n - 1}{\odot}}\sCurl \varphi\nonumber\\*
	& + \tfrac{m}{k + 1}\phi {\overset{m - 1,n}{\odot}}\sCurlDagger \varphi
	 + \tfrac{m n}{k l + k + l + 1}\phi {\overset{m - 1,n - 1}{\odot}}\sDiv \varphi 
	\label{eq:twist_product_rule},\\
	\sCurl (\phi {\overset{m,n}{\odot}}\varphi)={}&\tfrac{(-1)^{m + n + 1} (l -  n)}{j + l - 2 n}\varphi {\overset{m,n + 1}{\odot}}\sTwist \phi
	 + \tfrac{(-1)^{m + n} (j -  n) (j + l -  n + 1)}{(j + 1) (j + l - 2 n)}\varphi {\overset{m,n}{\odot}}\sCurl \phi\nonumber\\*
	& + \tfrac{(-1)^{m + n + 1} m (l -  n)}{(i + 1) (j + l - 2 n)}\varphi {\overset{m - 1,n + 1}{\odot}}\sCurlDagger \phi\nonumber\\*
	& + \tfrac{(-1)^{m + n} m (j -  n) (j + l -  n + 1)}{(i + 1) (j + 1) (j + l - 2 n)}\varphi {\overset{m - 1,n}{\odot}}\sDiv \phi
	 -  \tfrac{j -  n}{j + l - 2 n}\phi {\overset{m,n + 1}{\odot}}\sTwist \varphi\nonumber\\*
	& + \tfrac{(l -  n) (j + l -  n + 1)}{(l + 1) (j + l - 2 n)}\phi {\overset{m,n}{\odot}}\sCurl \varphi
	 + \tfrac{m (- j + n)}{(k + 1) (j + l - 2 n)}\phi {\overset{m - 1,n + 1}{\odot}}\sCurlDagger \varphi\nonumber\\*
	& + \tfrac{m (l -  n) (j + l -  n + 1)}{(k + 1) (l + 1) (j + l - 2 n)}\phi {\overset{m - 1,n}{\odot}}\sDiv \varphi 
	\label{eq:curl_product_rule},\\
	\sCurlDagger (\phi {\overset{m,n}{\odot}}\varphi)={}&\tfrac{(-1)^{m + n + 1} (k -  m)}{i + k - 2 m}\varphi {\overset{m + 1,n}{\odot}}\sTwist \phi
	 + \tfrac{(-1)^{m + n + 1} n (k -  m)}{(j + 1) (i + k - 2 m)}\varphi {\overset{m + 1,n - 1}{\odot}}\sCurl \phi\nonumber\\*
	& + \tfrac{(-1)^{m + n} (i -  m) (i + k -  m + 1)}{(i + 1) (i + k - 2 m)}\varphi {\overset{m,n}{\odot}}\sCurlDagger \phi\nonumber\\*
	& + \tfrac{(-1)^{m + n} n (i -  m) (i + k -  m + 1)}{(i + 1) (j + 1) (i + k - 2 m)}\varphi {\overset{m,n - 1}{\odot}}\sDiv \phi
	 -  \tfrac{i -  m}{i + k - 2 m}\phi {\overset{m + 1,n}{\odot}}\sTwist \varphi\nonumber\\*
	& + \tfrac{n (- i + m)}{(l + 1) (i + k - 2 m)}\phi {\overset{m + 1,n - 1}{\odot}}\sCurl \varphi
	 + \tfrac{(k -  m) (i + k -  m + 1)}{(k + 1) (i + k - 2 m)}\phi {\overset{m,n}{\odot}}\sCurlDagger \varphi\nonumber\\*
	& + \tfrac{n (k -  m) (i + k -  m + 1)}{(k + 1) (l + 1) (i + k - 2 m)}\phi {\overset{m,n - 1}{\odot}}\sDiv \varphi 
	\label{eq:curl_dagger_product_rule},\\
	\sDiv (\phi {\overset{m,n}{\odot}}\varphi)={}&\tfrac{(-1)^{m + n} (k -  m) (l -  n)}{(i + k - 2 m) (j + l - 2 n)}\varphi {\overset{m + 1,n + 1}{\odot}}\sTwist \phi\nonumber\\*
	& + \tfrac{(-1)^{m + n + 1} (j -  n) (k -  m) (j + l -  n + 1)}{(j + 1) (i + k - 2 m) (j + l - 2 n)}\varphi {\overset{m + 1,n}{\odot}}\sCurl \phi\nonumber\\*
	& + \tfrac{(-1)^{m + n + 1} (i -  m) (l -  n) (i + k -  m + 1)}{(i + 1) (i + k - 2 m) (j + l - 2 n)}\varphi {\overset{m,n + 1}{\odot}}\sCurlDagger \phi\nonumber\\*
	& + \tfrac{(-1)^{m + n} (i -  m) (j -  n) (i + k -  m + 1) (j + l -  n + 1)}{(i + 1) (j + 1) (i + k - 2 m) (j + l - 2 n)}\varphi {\overset{m,n}{\odot}}\sDiv \phi\nonumber\\*
	& + \tfrac{(i -  m) (j -  n)}{(i + k - 2 m) (j + l - 2 n)}\phi {\overset{m + 1,n + 1}{\odot}}\sTwist \varphi\nonumber\\*
	& + \tfrac{(- i + m) (l -  n) (j + l -  n + 1)}{(l + 1) (i + k - 2 m) (j + l - 2 n)}\phi {\overset{m + 1,n}{\odot}}\sCurl \varphi\nonumber\\*
	& + \tfrac{(j -  n) (- k + m) (i + k -  m + 1)}{(k + 1) (i + k - 2 m) (j + l - 2 n)}\phi {\overset{m,n + 1}{\odot}}\sCurlDagger \varphi\nonumber\\*
	& + \tfrac{(k -  m) (l -  n) (i + k -  m + 1) (j + l -  n + 1)}{(k + 1) (l + 1) (i + k - 2 m) (j + l - 2 n)}\phi {\overset{m,n}{\odot}}\sDiv \varphi 
	\label{eq:div_product_rule}.
\end{align}
\end{subequations}
}
\end{lemma}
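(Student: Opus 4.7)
The plan is to start from the ordinary Leibniz rule for $\nabla_{AA'}$ acting on the product $\phi \SymMult{m}{n} \varphi$, and then project the resulting expression onto each of the four irreducible components that appear in \cref{lemma:spinor_derivative_decomposition}. Since $\sTwist, \sCurl, \sCurlDagger, \sDiv$ are exactly the four irreducible projections of $\nabla$ (by \cref{thm:spinor_decomposition}), reading off each fundamental derivative of the product amounts to expanding $\nabla \phi$ and $\nabla \varphi$ into their respective irreducible parts and collecting terms.

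Concretely, for each of the four fundamental derivatives, I would first write it as the appropriate symmetrization and/or contraction of $\nabla_{AA'}(\phi \SymMult{m}{n} \varphi)$, then apply $\nabla_{AA'}(\phi \SymMult{m}{n} \varphi) = (\nabla_{AA'}\phi) \SymMult{m}{n} \varphi + \phi \SymMult{m}{n} (\nabla_{AA'}\varphi)$, and finally substitute \cref{eq:covariant_derivative_of_symmetric_spinor_decomposition} for each occurrence of $\nabla \phi$ and $\nabla \varphi$. This produces terms decorated with extra $\epsilon$ and $\conjugate \epsilon$ factors coming from the decomposition. Each such $\epsilon$ either sits outside the symmetric product and is killed by the outer symmetrization defining the fundamental derivative, or it is absorbed into the symmetric multiplication with the other factor, where it can meet an uncontracted index (raising or lowering the $\SymMult{m}{n}$ contraction count by one) or meet an already-contracted index (producing a sign and shifting the contraction count the other way). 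This yields exactly the eight terms on the right-hand side of each of \cref{eq:twist_product_rule,eq:curl_product_rule,eq:curl_dagger_product_rule,eq:div_product_rule}.

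The signs of the form $(-1)^{m+n}$ arise whenever the roles of $\phi$ and $\varphi$ are swapped in the symmetric product, since each contracted index pair picks up a sign from the antisymmetry of $\epsilon$ and $\conjugate \epsilon$. This is why the terms featuring a derivative of $\phi$ are listed in the form $\varphi \SymMult{\cdot}{\cdot} (\text{derivative of } \phi)$ with an overall $(-1)^{m+n}$, while the terms featuring a derivative of $\varphi$ carry no such sign.

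The main obstacle is the purely combinatorial bookkeeping of the symmetrization and contraction patterns: each irreducible projection requires distributing the free indices over the two factors in every possible symmetric way, and counting the multiplicities of those distributions produces the rational coefficients in $i, j, k, l, m, n$ that appear in \cref{eq:SymMultLeibniz}. In practice I would work out one of the four rules by hand in full detail (the $\sTwist$ rule is the most natural starting point, as it involves no internal contraction of the new derivative index), and then derive the other three either by analogous expansion or by composing the $\sTwist$ rule with the decomposition identities from \cref{lemma:spinor_derivative_decomposition}. A concluding verification on low valences using \emph{SymSpin} would confirm that the combinatorial factors match.
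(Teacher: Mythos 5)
The paper itself does not prove this lemma: it is imported verbatim from Aksteiner and B\"ackdahl \cite{SymSpinArticle} (their lemma 10), and the text explicitly defers the proof there. So the comparison is with that reference rather than with anything in this article. Your outline follows the same basic route that the cited derivation takes: apply the ordinary Leibniz rule to $\nabla_{AA'}\bigl(\phi\SymMult{m}{n}\varphi\bigr)$, substitute the irreducible decomposition \cref{eq:covariant_derivative_of_symmetric_spinor_decomposition} for $\nabla\phi$ and $\nabla\varphi$, and then canonicalize the result into symmetric products. Your explanation of the $(-1)^{m+n}$ signs is also correct, since $T\SymMult{p}{q}S=(-1)^{p+q}\,S\SymMult{p}{q}T$ follows from the antisymmetry of $\epsilon$ and $\conjugate{\epsilon}$ on each contracted pair.

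The gap is that the entire quantitative content of the lemma --- the rational coefficients in $i,j,k,l,m,n$ --- is exactly the part your plan defers. The mechanism you describe (``each $\epsilon$ is killed by the outer symmetrization or absorbed into the product'') essentially suffices only for \cref{eq:twist_product_rule}, where the outer operation is a total symmetrization. For $\sCurl$, $\sCurlDagger$ and $\sDiv$ the derivative index is contracted back into the free indices of the product, and one must split that contraction according to whether it hits an index of $\phi$ or of $\varphi$ and then re-expand the resulting partially symmetrized, partially contracted products into totally symmetrized ones; this re-sorting is where the factors such as $(j+l-n+1)$ and the denominators $(i+k-2m)$, $(j+l-2n)$ originate, and it requires its own expansion identity (or a systematic algorithm of the kind SymSpin implements), not just the two cases you list. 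Relatedly, the other three rules cannot be obtained by ``composing the $\sTwist$ rule with \cref{lemma:spinor_derivative_decomposition}''; each is a separate projection of the same Leibniz-expanded derivative and needs its own bookkeeping. Finally, checking low valences in \emph{SymSpin} is a sanity check, not a proof of the general formulas, since the coefficients are nontrivial rational functions of all six parameters. As it stands, then, your proposal is a correct strategy with the decisive combinatorial step left unexecuted.
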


\subsection{Reduced ansatz}%
\label{sub:writing_symmetry_operators_in_terms_of_twists}
The Dirac equation is 
\begin{subequations}
\begin{align}
	\tensor{(\sCurl^\dagger \phi)}{_{A'}} &= m \tensor{\chi}{_{A'}},% 
	\label{eq:left_dirac}\\
	\tensor{(\sCurl \chi)}{_A} &= -m \tensor{\phi}{_A}%
	\label{eq:right_dirac}.
\end{align}
\end{subequations}
The condition that a differential operator $\symmetryOperator \from (\phi_A, \chi_{A'}) \mapsto (\lambda_A, \gamma_{A'})$ is a symmetry operator for the Dirac equation is that
\begin{subequations}
\begin{align}
	\tensor{(\sCurlDagger \lambda)}{_{A'}} &= m \tensor{\gamma}{_{A'}}, 
	\label{eq:left_dirac_for_lambda}\\
	\tensor{(\sCurl \gamma)}{_A} &= -m \tensor{\lambda}{_A}.
	\label{eq:right_dirac_for_gamma}
\end{align}
\end{subequations}
for all $(\phi_A, \chi_{A'})$ satisfying \cref{eq:left_dirac,eq:right_dirac}.

\begin{lemma}\label{lemma:differential_operator_expressible_as_twists}
	Any symmetry operator $\symmetryOperator$ of the Dirac equation may be written only in terms of twists.
\end{lemma}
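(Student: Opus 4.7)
The plan is to induct on the differential order of $\symmetryOperator$ and, within each induction step, on the depth at which the innermost non-twist fundamental derivative occurs. By \cref{lemma:spinor_derivative_decomposition} any iterated covariant derivative of a symmetric spinor expands, with algebraic coefficients built from spin-metrics, into compositions of $\sTwist$, $\sCurl$, $\sCurlDagger$, and $\sDiv$; hence it suffices to show that every such composition acting on $\phi_A$ or $\chi_{A'}$ can be rewritten, modulo \cref{eq:left_dirac,eq:right_dirac}, as an algebraic combination of compositions of $\sTwist$'s only.

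The base case is handed to us by the Dirac equation itself. By inspection of \cref{def:fundamental_derivatives} the only non-twist fundamental derivatives that are even defined on $\phi_A \in \SymSpace_{1,0}$ and $\chi_{A'} \in \SymSpace_{0,1}$ are $\sCurlDagger \phi$ and $\sCurl \chi$, and \cref{eq:left_dirac,eq:right_dirac} replace them algebraically by $m \chi$ and $-m \phi$. Thus whenever a non-twist operator ends up as the innermost symbol of a composition, the total order drops by one and the surrounding composition is treated by the outer induction hypothesis.

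To drive every non-twist operator to the innermost slot I would push it past each intervening $\sTwist$ using the commutator identities of the previous subsection. Each application of $\sCurl \sTwist \psi = \tfrac{l}{l+1} \sTwist \sCurl \psi - \square \SymMult{0}{0} \psi$ or $\sCurlDagger \sTwist \psi = \tfrac{k}{k+1} \sTwist \sCurlDagger \psi - \conjugate\square \SymMult{0}{0} \psi$ moves one non-twist symbol one slot deeper, pushes a $\sTwist$ to the outside, and leaves behind a box-curvature remainder of strictly lower order, which the outer induction absorbs. The identity \cref{eq:div_twist_to_curl_curldagger_and_twist_div} for $\sDiv \sTwist$ is less clean because it produces a $\sCurl \sCurlDagger$ piece in addition to a $\sTwist \sDiv$ piece, but each of the resulting non-twist symbols is strictly deeper than the original $\sDiv$, so the inner induction on depth still makes progress.

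The main obstacle I anticipate is the bookkeeping: each commutation leaks curvature-valued box terms, and the $\sDiv \sTwist$ identity in particular splits one non-twist symbol into a pair. To guarantee termination I would measure each composition by the lexicographic pair (total order, total depth summed over all non-twist symbols counted from the right) and verify that every rewriting step strictly reduces it; a finite descent then yields an expression involving only twists together with algebraic curvature corrections, as claimed.
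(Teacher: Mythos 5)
Your strategy is the same as the paper's: induct on the order, push the non-twist fundamental derivatives through the twists with \cref{eq:curl_twist_to_twist_curl,eq:curldagger_twist_to_twist_curldagger,eq:div_twist_to_curl_curldagger_and_twist_div}, eliminate them at the core via \cref{eq:left_dirac,eq:right_dirac}, and treat the box/curvature remainders as lower-order terms. The step that fails as written is your termination argument for the divergence case. Applying \cref{eq:div_twist_to_curl_curldagger_and_twist_div} to $\sDiv\sTwist\psi$ with $\psi$ a string of $n-1$ twists acting on $S$ produces, besides the harmless $\sTwist\sDiv\psi$ piece, the term $\sCurl\sCurlDagger\psi$. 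Counting depth as the number of operators standing between a symbol and the field, the new $\sCurlDagger$ indeed sits one slot deeper, but the new $\sCurl$ sits at exactly the depth the original $\sDiv$ occupied, so it is not true that \emph{each} resulting non-twist symbol is strictly deeper. Worse, your proposed lexicographic measure (total order, sum of the depths of all non-twist symbols) does not decrease on this term: the order is unchanged, and the summed depth goes from $n$ to $(n-1)+n=2n-1$, which grows for $n\geq 2$ because one non-twist symbol has become two. So the finite-descent argument you rely on at the end is not established by that measure.

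The gap is repairable, and essentially by the measure you state at the outset: track only the depth of the \emph{innermost} non-twist symbol. Under all three commutations this strictly decreases (for $\sDiv\sTwist$ both resulting terms have their innermost non-twist symbol one slot closer to the field), and once that symbol acts directly on $\phi$ or $\chi$ it is removed either by the Dirac equation or because the push-through coefficient degenerates (e.g.\ $\sCurl\sTwist\phi=\Psi\SymMult{1}{0}\phi$), which strictly lowers the total order; the outer induction then absorbs the result even though other non-twist symbols further out may again be far from the field. This is in effect how the paper organizes the induction: its hypothesis is that one non-twist operator applied to a string of twists equals a twist-string of one lower order plus lower-order terms, and the $\sDiv$ case is reduced to the already-established $\sCurl$ case, so no multiset or summed-depth bookkeeping is needed. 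You should also say a word (as the paper implicitly does via the Leibniz rules of \cref{lemma:symmult_liebniz_rules}) about why outer derivatives hitting the curvature-valued remainders still yield terms of strictly lower differential order in the fields.
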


\begin{proof}
	We will show this by induction on the order of the differential operator.

	For the base case, consider that by \cref{lemma:spinor_derivative_decomposition},
	\begin{align}
		\tensor{\nabla}{_A^{A'}} \phi_B &= \tensor{(\sTwist \phi)}{_A_B^{A'}} - \frac{1}{2} \tensor{\epsilon}{_A_B}\tensor{(\sCurlDagger \phi)}{^{A'}}\nonumber\\
	\overset{\text{\cref{eq:left_dirac}}}&{=} \tensor{(\sTwist \phi)}{_A_B^{A'}} - \frac{1}{2} \tensor{\epsilon}{_A_B} m \tensor{\chi}{^{A'}} 
	\end{align}
	and
	\begin{align}
		\tensor{\nabla}{_A^{A'}} \chi^{B'} &= \tensor{(\sTwist \chi)}{_A^{A'}^{B'}} - \frac{1}{2} \tensor{\epsilon}{^{A'}^{B'}}\tensor{(\sCurl \chi)}{_A}\nonumber\\
	\overset{\text{\cref{eq:right_dirac}}}&{=} \tensor{(\sTwist \chi)}{_A^{A'}^{B'}} + \frac{1}{2} \tensor{\epsilon}{^{A'}^{B'}} m \tensor{\phi}{_A}.
	\end{align}

	For the induction step, we need only consider three cases.
	Let $\#$ stand for \enquote{some coefficient}, $S$ for either $\phi$ or $\chi$, and $H$ for the induction hypothesis.
	We will also use that whenever a spinor box operator appears we may write it as a partial contraction with the Riemann spinor.
	Then
	{\allowdisplaybreaks
	\begin{align}
		\sCurl \underbrace{\sTwist \dots \sTwist}_{\times n} S \overset{\text{\cref{eq:curl_twist_to_twist_curl}}}&{=} {\#} \sTwist \sCurl \underbrace{\sTwist \dots \sTwist}_{\times (n - 1)} S + {\#} \square \underbrace{\sTwist \dots \sTwist}_{\times (n - 1)} S\nonumber\\*
		&{}= {\#} \sTwist \sCurl \underbrace{\sTwist \dots \sTwist}_{\times (n - 1)} S + \text{lower order terms}\nonumber\\*
		\overset{H}&{=} {\#} \underbrace{\sTwist \dots \sTwist}_{\times n} S + \text{lower order terms},%
		\label{eq:curl_twists_to_twists}\\
		\sCurlDagger \underbrace{\sTwist \dots \sTwist}_{\times n} S \overset{\text{\cref{eq:curldagger_twist_to_twist_curldagger}}}&{=} \# \sTwist \sCurlDagger \underbrace{\sTwist \dots \sTwist}_{\times (n - 1)} S + {\#} \square \underbrace{\sTwist \dots \sTwist}_{\times (n - 1)} S\nonumber\\*
		&{}= {\#} \sTwist \sCurlDagger \underbrace{\sTwist \dots \sTwist}_{\times (n - 1)} S + \text{lower order terms}\nonumber\\*
		\overset{H}&{=} {\#} \underbrace{\sTwist \dots \sTwist}_{\times n} S + \text{lower order terms},%
		\label{eq:curldagger_twists_to_twists}\\
		\sDiv \underbrace{\sTwist \dots \sTwist}_{\times n} S \overset{\text{\cref{eq:div_twist_to_curl_curldagger_and_twist_div}}}&{=} \# \sCurl \sCurlDagger \underbrace{\sTwist \dots \sTwist}_{\times (n - 1)} S + \# \sTwist \sDiv \underbrace{\sTwist \dots \sTwist}_{\times (n - 1)} S + \# \square \underbrace{\sTwist \dots \sTwist}_{\times (n - 1)} S\nonumber\\*
		&{}=  \# \sCurl \sCurlDagger \underbrace{\sTwist \dots \sTwist}_{\times (n - 1)} S + \# \sTwist \sDiv \underbrace{\sTwist \dots \sTwist}_{\times (n - 1)} S + \text{lower order terms}\nonumber\\*
		\overset{H}&{=} \# \sCurl \underbrace{\sTwist \dots \sTwist}_{\times n} S + \# \underbrace{\sTwist \dots \sTwist}_{\times n} S + \text{lower order terms}\nonumber\\*
		\overset{\text{\cref{eq:curl_twists_to_twists}}}&{=} \# \underbrace{\sTwist \dots \sTwist}_{\times n} S + \text{lower order terms}.%
		\label{eq:div_twists_to_twists}
	\end{align}
	}%
	Note that the right-most sides of \cref{eq:curl_twists_to_twists,eq:curldagger_twists_to_twists,eq:div_twists_to_twists} all have one less order than the left-most sides.
\end{proof}

This means that the only derivative operator we need in an ansatz for a symmetry operator is the twist operator.
It is to great advantage that the proof is constructive.
It allows the first orders to be calculated explicitly.
Order one was shown as the base case.
The second order comes out to
{\allowdisplaybreaks
\begin{subequations}
	\begin{align}
\sDiv \sTwist \phi={}&(\tfrac{3}{2} m^2 - 6 \Lambda)\phi \label{eq:order2ReductionToTwists1},\\
\sDiv \sTwist \chi={}&(\tfrac{3}{2} m^2 - 6 \Lambda)\chi \label{eq:order2ReductionToTwists2},\\
\sCurl \sTwist \phi={}&\Psi {\overset{1,0}{\odot}}\phi \label{eq:order2ReductionToTwists3},\\
\sCurl \sTwist \chi={}& {-} \tfrac{1}{2} m\sTwist \phi
 + \Phi {\overset{0,1}{\odot}}\chi  \label{eq:order2ReductionToTwists4},\\
\sCurlDagger \sTwist \phi={}&\tfrac{1}{2} m\sTwist \chi
 + \Phi {\overset{1,0}{\odot}}\phi  \label{eq:order2ReductionToTwists5},\\
\sCurlDagger \sTwist \chi={}&\bar\Psi{\overset{0,1}{\odot}}\chi \label{eq:order2ReductionToTwists6},
\end{align}
\end{subequations}
}%
and the third order comes out to
{\allowdisplaybreaks
\begin{subequations}
\begin{align}
\sDiv \sTwist \sTwist \phi={}&(\tfrac{4}{3} m^2 - 12 \Lambda)\sTwist \phi
 -  \tfrac{5}{6}(\sCurl \Phi){\overset{1,0}{\odot}}\phi
 + \tfrac{5}{18}(\sDiv \Phi){\overset{0,0}{\odot}}\phi
 + \tfrac{10}{3}\Phi {\overset{1,1}{\odot}}\sTwist \phi
 -  \tfrac{9}{2}(\sTwist \Lambda){\overset{0,0}{\odot}}\phi\nonumber\\
& + \tfrac{3}{2}\Psi {\overset{2,0}{\odot}}\sTwist \phi  \label{eq:order3ReductionToTwists1},\\
\sDiv \sTwist \sTwist \chi={}&(\tfrac{4}{3} m^2 - 12 \Lambda)\sTwist \chi
 -  \tfrac{5}{6}(\sCurlDagger \Phi){\overset{0,1}{\odot}}\chi
 + \tfrac{5}{18}(\sDiv \Phi){\overset{0,0}{\odot}}\chi
 + \tfrac{10}{3}\Phi {\overset{1,1}{\odot}}\sTwist \chi\nonumber\\
& -  \tfrac{9}{2}(\sTwist \Lambda){\overset{0,0}{\odot}}\chi
 + \tfrac{3}{2}\bar\Psi{\overset{0,2}{\odot}}\sTwist \chi  \label{eq:order3ReductionToTwists2},\\
\sCurl \sTwist \sTwist \phi={}&\tfrac{1}{2}(\sTwist \Psi){\overset{1,0}{\odot}}\phi
 -  \tfrac{1}{10}(\sCurlDagger \Psi){\overset{0,0}{\odot}}\phi
 + \tfrac{5}{2}\Psi {\overset{1,0}{\odot}}\sTwist \phi
 + \tfrac{1}{4} m\Psi {\overset{0,0}{\odot}}\chi
 + \Phi {\overset{0,1}{\odot}}\sTwist \phi  \label{eq:order3ReductionToTwists3},\\
\sCurl \sTwist \sTwist \chi={}&- \tfrac{1}{3} m\sTwist \sTwist \phi
 + \tfrac{2}{3}(\sTwist \Phi){\overset{0,1}{\odot}}\chi
 -  \tfrac{2}{9}(\sCurl \Phi){\overset{0,0}{\odot}}\chi
 + \tfrac{8}{3}\Phi {\overset{0,1}{\odot}}\sTwist \chi
 -  \tfrac{1}{3} m\Phi {\overset{0,0}{\odot}}\phi\nonumber\\
& + \Psi {\overset{1,0}{\odot}}\sTwist \chi  \label{eq:order3ReductionToTwists4},\\
\sCurlDagger \sTwist \sTwist \phi={}&\tfrac{1}{3} m\sTwist \sTwist \chi
 + \tfrac{2}{3}(\sTwist \Phi){\overset{1,0}{\odot}}\phi
 -  \tfrac{2}{9}(\sCurlDagger \Phi){\overset{0,0}{\odot}}\phi
 + \tfrac{8}{3}\Phi {\overset{1,0}{\odot}}\sTwist \phi
 + \tfrac{1}{3} m\Phi {\overset{0,0}{\odot}}\chi\nonumber\\
& + \bar\Psi{\overset{0,1}{\odot}}\sTwist \phi  \label{eq:order3ReductionToTwists5},\\
\sCurlDagger \sTwist \sTwist \chi={}&\tfrac{1}{2}(\sTwist \bar\Psi){\overset{0,1}{\odot}}\chi
 -  \tfrac{1}{10}(\sCurl \bar\Psi){\overset{0,0}{\odot}}\chi
 + \tfrac{5}{2}\bar\Psi{\overset{0,1}{\odot}}\sTwist \chi
 -  \tfrac{1}{4} m\bar\Psi{\overset{0,0}{\odot}}\phi
 + \Phi {\overset{1,0}{\odot}}\sTwist \chi  \label{eq:order3ReductionToTwists6}.
\end{align}
\end{subequations}
}%
These are shown in our \emph{Mathematica} notebook~\cite{myGithubRepo}.

\subsection{Decomposing equations}%
\label{sub:decomposing_equations}

A set $\set{\tensor{(\phi_i)}{_{B \dots Q}^{B' \dots Q'}} \text{, } i = 1, 2, \dots}$ of spinor fields subject to a differential equation is an \emph{exact set of fields}~\cite[section 5.10]{PenRinVol1} if, at each spacetime point $P$,
\begin{enumerate}
	\item the symmetrized derivatives $\tensor{\nabla}{_{(A_1}^{(A_1'}} \tensor{(\phi_i)}{_{B \dots Q)}^{B' \dots Q')}}$, $\tensor{\nabla}{_{(A_2}^{(A_2'}} \tensor{\nabla}{_{A_1}^{A_1'}} \tensor{(\phi_i)}{_{B \dots Q)}^{B' \dots Q')}}$, etc., can take arbitrary values, and%
	\label{item:symmetrized_derivatives_take_arbitrary_values}
	\item the unsymmetrized derivatives are determined by the symmetrized derivatives.%
	\label{item:unsymmetrized_derivatives_are_determined}
\end{enumerate}

The Dirac fields form an exact set of fields.
This is a consequence of
\cref{lemma:differential_operator_expressible_as_twists}.
For this reason, we will encounter equations of the types
{\allowdisplaybreaks
\begin{subequations}
\begin{align}
	\tensor{S}{^{A_1}^\dots^{A_{k+1}}^B_{A_1'}_\dots_{A_k'}} \tensor{(\underbrace{\sTwist \sTwist \dots \sTwist}_{\times k} \phi)}{_{A_1}_\dots_{A_{k+1}}^{A_1'}^\dots^{A_k'}} &= 0,%
	\label{eq:term_in_left_dirac}\\
	\tensor{S}{^{A_1}^\dots^{A_{k+1}}_{A_1'}_\dots_{A_k'}^{B'}} \tensor{(\underbrace{\sTwist \sTwist \dots \sTwist}_{\times k} \phi)}{_{A_1}_\dots_{A_{k+1}}^{A_1'}^\dots^{A_k'}} &= 0,\\
	\tensor{S}{^{A_1}^\dots^{A_k}^B_{A_1'}_\dots_{A_{k+1}'}} \tensor{(\underbrace{\sTwist \sTwist \dots \sTwist}_{\times k} \chi)}{_{A_1}_\dots_{A_k}^{A_1'}^\dots^{A_{k+1}'}} &= 0,\\
	\tensor{S}{^{A_1}^\dots^{A_k}_{A_1'}_\dots_{A_{k+1}'}^{B'}} \tensor{(\underbrace{\sTwist \sTwist \dots \sTwist}_{\times k} \chi)}{_{A_1}_\dots_{A_k}^{A_1'}^\dots^{A_{k+1}'}} &= 0,
\end{align}
\end{subequations}
}%
where $\phi_A$ and $\chi_{A'}$ are the Dirac fields and $S$ is a spinor field.
$S$ may without loss of generality be taken to be symmetric in the indices that are contracted since they are contracted with a symmetric spinor.

By \cref{item:symmetrized_derivatives_take_arbitrary_values}, the twists can take arbitrary values at $P$.
Contracting, for example, \cref{eq:term_in_left_dirac} with a test field $T_B$ yields the scalar equation
\begin{align}\label{eq:term_in_left_dirac_contracted_with_test_field}
	\tensor{S}{^{A_1}^\dots^{A_{k+1}}^B_{A_1'}_\dots_{A_k'}} \tensor{(\underbrace{\sTwist \sTwist \dots \sTwist}_{\times k} \phi)}{_{A_1}_\dots_{A_{k+1}}^{A_1'}^\dots^{A_k'}} T_B &= 0.
\end{align}
But since the test field also may take arbitrary values, spinors of the form
\begin{align}
	\tensor{W}{_{A_1}_\dots_{A_{k+1}}^{A_1'}^\dots^{A_k'}_B} \definedas \tensor{(\underbrace{\sTwist \sTwist \dots \sTwist}_{\times k} \phi)}{_{A_1}_\dots_{A_{k+1}}^{A_1'}^\dots^{A_k'}} T_B
\end{align}
span $\tensor{\complexes}{_{(A_1}_\dots_{A_{k+1})}_B^{(A_1'}^\dots^{A_k')}}$.
By \cref{thm:spinor_decomposition}, $\tensor{W}{_{A_1}_\dots_{A_{k+1}}^{A_1'}^\dots^{A_k'}_{B}}$ has two independent parts: $\tensor{W}{_{(A_1}_\dots_{A_{k+1}}^{(A_1'}^\dots^{A_k')}_{B)}}$ and $\tensor{W}{_{(A_1}_\dots_{A_k}^{C}^{A_1'}^\dots^{A_k'}_{|C|}} \tensor{\epsilon}{_{A_{k+1})}_B}$.
Hence \cref{eq:term_in_left_dirac_contracted_with_test_field} splits into
\begin{align}
	0 ={}& \tensor{S}{^{(A_1}^\dots^{A_{k+1}}^{B)}_{(A_1'}_\dots_{A_k')}} \tensor{W}{_{(A_1}_\dots_{A_{k+1}}^{(A_1'}^\dots^{A_k')}_{B)}} \nonumber\\*
	-{}& \frac{k}{k+1} \tensor{S}{^{(A_1 \dots A_k) B}_{B (A_1' \dots A_k')}} \tensor{W}{_{A_1}_\dots_{A_k}^{C}^{A_1'}^\dots^{A_k'}_{C}}.
\end{align}
The two independent parts of $\tensor{W}{_{A_1}_\dots_{A_{k+1}}^{A_1'}^\dots^{A_k'}_{B}}$ may take arbitrary and independent values, so
\begin{subequations}
\begin{align}
	\tensor{S}{^{(A_1}^\dots^{A_{k+1}}^{B)}_{A_1' \dots A_k'}} &= 0,\\
	\tensor{S}{^{A_1 \dots A_k B}_{B A_1' \dots A_k'}} &= 0.
\end{align}
\end{subequations}

This technique is used abundantly when analyzing the equations for the symmetry operators.

\section{Conditions for and form of the symmetry operators}

There is a general method that we can follow to derive conditions for the existence of an $n$:th order symmetry operator $L$.
Firstly, we make an ansatz for $L$ and substitute with this in the Dirac equation.
Secondly, we rewrite the equations to only contain twists using \cref{lemma:differential_operator_expressible_as_twists}.
We then decompose the resulting equations into irreducible parts as in \cref{sub:decomposing_equations} and lastly we simplify.

In this section, we first demonstrate this method by applying it to the zeroth order symmetry operator.
Then the results for the first and second order symmetry operators are stated directly and interpreted.

The main results are
\cref{thm:condition_for_0th_order_symmetry_operator_for_dirac_equation,%
thm:condition_for_1st_order_symmetry_operator_for_dirac_equation,%
thm:condition_for_2nd_order_symmetry_operator_for_dirac_equation}.

\subsection{Zeroth order symmetry operator}%
\label{sub:zeroth_order_symmetry_operator}

Let $\symmetryOperator \from (\tensor{\phi}{_A}, \tensor{\chi}{_{A'}}) \mapsto (\tensor{\lambda}{_A}, \tensor{\gamma}{_{A'}})$ be of the form
\begin{subequations}
\begin{align}
	\tensor{\lambda}{_A} ={}& \tensor{K}{_A^B} \tensor{\phi}{_B} + \tensor{L}{_A^{A'}} \tensor{\chi}{_{A'}},%
	\label{eq:0th_order_symop_left_ansatz}\\
	\tensor{\gamma}{_{A'}} ={}& \tensor{M}{^A_{A'}} \tensor{\phi}{_A} + \tensor{N}{_{A'}^{B'}} \tensor{\chi}{_{B'}}.%
	\label{eq:0th_order_symop_right_ansatz}
\end{align}
\end{subequations}
$\tensor{L}{_A^{A'}}$ and $\tensor{M}{_A^{A'}}$ are already irreducible, but
\begin{subequations}
\begin{align}
\tensor{K}{_{A B}} ={}& -\frac{1}{2} \tensor{K}{_C^C} \tensor{\epsilon}{_{A B}} 
 + \tensor{K}{_{(A B)}},\\
\tensor{N}{^{A' B'}} ={}& -\frac{1}{2} \tensor{N}{^{C'}_{C'}} \tensor{\conjugate\epsilon}{^{A' B'}} 
 + \tensor{N}{^{(A' B')}}, 
\end{align}
\end{subequations}
so we will name these irreducible parts
\begin{align*}
\coeff{K}{0}{0}  ={}& \tensor{K}{^A_A}, \quad& \tensor{\coeff{K}{2}{0}}{_{AB}} ={}& \tensor{K}{_{(A B)}},\\
\coeff{N}{0}{0} = {}& \tensor{N}{^{A'}_{A'}}, \quad& \tensor{\coeff{N}{0}{2}}{^{A' B'}} ={}& \tensor{N}{^{(A' B')}},
\end{align*}
where the underscript indicates the valence numbers for totally symmetric spinors.
Substituting this into \cref{eq:left_dirac_for_lambda,eq:right_dirac_for_gamma}, we have that
\begin{subequations}
\begin{align}
	\sCurlDagger \left( -L{\overset{0,1}{\odot }}\chi + \frac{1}{2} \underset{0,0}{K}{}{\overset{0,0}{\odot }}\phi + \underset{2,0}{K}{} {\overset{1,0}{\odot }}\phi \right) ={}&
	\frac{1}{2} m\underset{0,0}{N}{} {\overset{0,0}{\odot }}\chi - m \underset{0,2}{N}{}{\overset{0,1}{\odot }}\chi - m M{\overset{1,0}{\odot }}\phi,\\
	\sCurl \left( \frac{1}{2} \underset{0,0}{N}{}{\overset{0,0}{\odot }}\chi - \underset{0,2}{N}{}{\overset{0,1}{\odot }}\chi - M {\overset{1,0}{\odot }}\phi \right) ={}&
	m L{\overset{0,1}{\odot }}\chi - \frac{1}{2} m\underset{0,0}{K}{}{\overset{0,0}{\odot }}\phi + m \underset{2,0}{K}{}{\overset{1,0}{\odot }}\phi.
\end{align}
\end{subequations}
Applying the Leibniz rules from \cref{lemma:symmult_liebniz_rules} yields
\begin{subequations}
\begin{align}
	0 ={}& - L{\overset{1,1}{\odot}} \sTwist \chi + \frac{1}{2} L {\overset{1,0}{\odot}}\sCurl \chi  + (\sCurlDagger L) {\overset{0,1}{\odot}}\chi - \frac{1}{2} (\sDiv L) {\overset{0,0}{\odot}}\chi - \frac{1}{2} (\sTwist \underset{0,0}{K}{}) {\overset{1,0}{\odot}}\phi - \frac{1}{2}\underset{0,0}{K}{}{\overset{0,0}{\odot}}\sCurlDagger \phi \nonumber\\
	&{} - \underset{2,0}{K}{}{\overset{2,0}{\odot}}\sTwist \phi + (\sCurlDagger \underset{2,0}{K}{}) {\overset{1,0}{\odot}} \phi + \frac{1}{2} m\underset{0,0}{N}{}{\overset{0,0}{\odot}}\chi - m \underset{0,2}{N}{} {\overset{0,1}{\odot}}\chi - m M {\overset{1,0}{\odot}}\phi,\\
	0 ={}& -\frac{1}{2} (\sTwist \underset{0,0}{N}{}) {\overset{0,1}{\odot}}\chi - \frac{1}{2}\underset{0,0}{N}{}{\overset{0,0}{\odot}}\sCurl \chi - \underset{0,2}{N}{}{\overset{0,2}{\odot}}\sTwist \chi + (\sCurl \underset{0,2}{N}{}) {\overset{0,1}{\odot}}\chi - M {\overset{1,1}{\odot}}\sTwist \phi - \frac{1}{2}M {\overset{0,1}{\odot}}\sCurlDagger \phi \nonumber\\
	&{} (\sCurl M) {\overset{1,0}{\odot}}\phi - \frac{1}{2} (\sDiv M) {\overset{0,0}{\odot}} \phi  + m L {\overset{0,1}{\odot}}\chi - \frac{1}{2} m\underset{0,0}{K}{}{\overset{0,0}{\odot}}\phi + m \underset{2,0}{K}{} {\overset{1,0}{\odot}}\phi.
\end{align}
\end{subequations}
Using \cref{lemma:differential_operator_expressible_as_twists}, this can be rewritten in terms of only twists:
\begin{subequations}
\begin{align}
	0={}&- L {\overset{1,1}{\odot}}\sTwist \chi
 	+ \frac{1}{2} mL {\overset{1,0}{\odot}}\phi
 	+L {\overset{0,1}{\odot}}\sCurlDagger\chi  
 	- \frac{1}{2} ( \sDiv L) {\overset{0,0}{\odot}}\chi
 	- \frac{1}{2}( \sTwist \underset{0,0}{K}{}) {\overset{1,0}{\odot}}\phi
 	- \frac{1}{2} m\underset{0,0}{K}{}{\overset{0,0}{\odot}}\chi\nonumber\\
	& - \underset{2,0}{K}{}{\overset{2,0}{\odot}}\sTwist \phi
 	+ (\sCurlDagger \underset{2,0}{K}{}) {\overset{1,0}{\odot}}\phi 
 	+ \frac{1}{2} m\underset{0,0}{N}{}{\overset{0,0}{\odot}}\chi
 	+ m (\underset{0,2}{N}{}) {\overset{0,1}{\odot}}\chi
 	- m M {\overset{1,0}{\odot}}\phi ,%
 	\label{eq:left_dirac_for_lambda_in_terms_of_twists}\\
	0={}&-\frac{1}{2}(\sTwist \underset{0,0}{N}{}) {\overset{0,1}{\odot}}\chi
 	+ \frac{1}{2} m\underset{0,0}{N}{}{\overset{0,0}{\odot}}\phi
 	- \underset{0,2}{N}{}{\overset{0,2}{\odot}}\sTwist \chi
 	+ (\sCurl \underset{0,2}{N}{}) {\overset{0,1}{\odot}}\chi
 	- M{\overset{1,1}{\odot}}\sTwist \phi
 	- \frac{1}{2} mM {\overset{0,1}{\odot}}\chi\nonumber\\
	& + (\sCurl M) {\overset{1,0}{\odot}}\phi
	- \frac{1}{2}(\sDiv M) {\overset{0,0}{\odot}}\phi
	+ m L{\overset{0,1}{\odot}}\chi 
	- \frac{1}{2} m\underset{0,0}{K}{}{\overset{0,0}{\odot}}\phi
	+  m\underset{2,0}{K}{}{\overset{1,0}{\odot}}\phi.%
	\label{eq:right_dirac_for_gamma_in_terms_of_twists}
\end{align}
\end{subequations}

Now, since each order of derivative is independent by \cref{sub:decomposing_equations}, and since each field is independent and free, \cref{eq:left_dirac_for_lambda_in_terms_of_twists,eq:right_dirac_for_gamma_in_terms_of_twists} splits into eight equations.

\subsubsection{Collecting first order terms}%
\label{sub:first_order_terms}

Isolating the $\sTwist \phi$-terms of \cref{eq:left_dirac_for_lambda_in_terms_of_twists} yields
\begin{subequations}
\begin{align}\label{eq:twist_phi_condition1_for_0th_order_symop}
	0={}& \coeff{K}{2}{0} {\overset{2,0}{\odot}} \sTwist \phi.
\end{align}
Isolating the $\sTwist \chi$-terms of \cref{eq:left_dirac_for_lambda_in_terms_of_twists} yields
\begin{align}\label{eq:twist_chi_condition1_for_0th_order_symop}
 	0={} L {\overset{1, 1}{\odot}} \sTwist \chi.
\end{align}
Isolating the $\sTwist \phi$-terms of \cref{eq:right_dirac_for_gamma_in_terms_of_twists} yields
\begin{align}\label{eq:twist_phi_condition2_for_0th_order_symop}
 	0={} M {\overset{1,1}{\odot}} \sTwist \phi.
\end{align}
Isolating the $\sTwist \chi$-terms of \cref{eq:right_dirac_for_gamma_in_terms_of_twists} yields
\begin{align}\label{eq:twist_chi_condition2_for_0th_order_symop}
 	0={} \coeff{N}{0}{2} {\overset{0,2}{\odot}} \sTwist \chi.
\end{align}
\end{subequations}

The reason for introducing $\coeff{K}{2}{0}$ and $\coeff{N}{0}{2}$ is that
\cref{eq:twist_phi_condition1_for_0th_order_symop,%
eq:twist_chi_condition1_for_0th_order_symop,%
eq:twist_phi_condition2_for_0th_order_symop,%
eq:twist_chi_condition2_for_0th_order_symop}
are irreducible in the sense of \cref{sub:decomposing_equations}.
It follows that
\begin{subequations}
\begin{align}
	\underset{2,0}{K}{}={}&0 \label{eq:order0SymOpOrder1Conditions1},\\
	L={}&0 \label{eq:order0SymOpOrder1Conditions2},\\
	M={}&0 \label{eq:order0SymOpOrder1Conditions3},\\
	\underset{0,2}{N}{}={}&0 \label{eq:order0SymOpOrder1Conditions4}.
\end{align}
\end{subequations}

\subsubsection{Collecting zeroth order terms}%
\label{sub:zeroth_order_terms}

Isolating the $\phi$-terms of \cref{eq:left_dirac_for_lambda_in_terms_of_twists} yields
\begin{subequations}
\begin{align}\label{eq:phi_condition1_for_0th_order_symop}
	0={}&
	\frac{1}{2} m L {\overset{1, 0}{\odot}} \phi
	- \frac{1}{2} (\sTwist \coeff{K}{0}{0}) {\overset{1, 0}{\odot}} \phi
	+ \underset{2,0}{K}{} {\overset{1,0}{\odot}}\sCurlDagger \phi
	- m M {\overset{1,0}{\odot}} \phi.
\end{align}
Isolating the $\chi$-terms of \cref{eq:left_dirac_for_lambda_in_terms_of_twists} yields
\begin{align}\label{eq:chi_condition1_for_0th_order_symop}
	0={}&
 	(\sCurlDagger L) {\overset{0,1}{\odot}} \chi
 	+ \frac{1}{2} (\sDiv L) {\overset{0,0}{\odot}} \chi
 	- \frac{1}{2} m \underset{0,0}{K}{}{\overset{0,0}{\odot}} \chi
 	+ \frac{1}{2} m \underset{0,0}{N}{}{\overset{0,0}{\odot}} \chi
 	- m \underset{0,2}{N}{} {\overset{0,1}{\odot}} \chi.
\end{align}
Isolating the $\phi$-terms of \cref{eq:right_dirac_for_gamma_in_terms_of_twists} yields
\begin{align}\label{eq:phi_condition2_for_0th_order_symop}
	0={}&
 	\frac{1}{2} m\underset{0,0}{N}{}{\overset{0,0}{\odot}}\phi
	+ (\sCurl M) {\overset{1,0}{\odot}} \phi
	- \frac{1}{2} (\sDiv M) {\overset{0,0}{\odot}} \phi
	- \frac{1}{2} m \underset{0,0}{K}{} {\overset{0,0}{\odot}} \phi
	+ m \underset{2,0}{K}{} {\overset{1,0}{\odot}} \phi.
\end{align}
Isolating the $\chi$-terms of \cref{eq:right_dirac_for_gamma_in_terms_of_twists} yields
\begin{align}\label{eq:chi_condition2_for_0th_order_symop}
	0={}&
	-\frac{1}{2} (\sTwist \underset{0,0}{N}{}) {\overset{0,1}{\odot}} \chi
 	+ (\sCurl \underset{0,2}{N}{}) {\overset{0,1}{\odot}} \chi
 	- \frac{1}{2} m M {\overset{0,1}{\odot}} \chi
	- m L {\overset{0,1}{\odot}} \chi.
\end{align}
\end{subequations}

Using
\cref{eq:order0SymOpOrder1Conditions1,%
eq:order0SymOpOrder1Conditions2,%
eq:order0SymOpOrder1Conditions3,%
eq:order0SymOpOrder1Conditions4},
\cref{eq:phi_condition1_for_0th_order_symop,%
eq:chi_condition1_for_0th_order_symop,%
eq:phi_condition2_for_0th_order_symop,%
eq:chi_condition2_for_0th_order_symop}
reduce to
\begin{subequations}
\begin{align}
	\sTwist \underset{0,0}{K}{}={}&0 \label{eq:order0SymOpOrder0Conditions1},\\
	-\frac{1}{2} m\underset{0,0}{K}{} + \frac{1}{2} m\underset{0,0}{N}{}={}&0 \label{eq:order0SymOpOrder0Conditions2},\\
	\sTwist \underset{0,0}{N}{}={}&0 \label{eq:order0SymOpOrder0Conditions3}.
\end{align}
\end{subequations}

\subsubsection{Interpretation and discussion}%
\label{sub:zeroth_order_symmetry_operator:interpretation_and_discussion}

To interpret these equations, note that \cref{eq:order0SymOpOrder1Conditions2,eq:order0SymOpOrder1Conditions3} imply that there is no mixing between $\phi_A$ and $\chi_{A'}$.
\cref{eq:order0SymOpOrder1Conditions1,eq:order0SymOpOrder1Conditions4} imply that the only non-zero parts of $\tensor{K}{_A_B}$ and $\tensor{N}{^{A'}^{B'}}$ are the trace parts. That is, they are proportional to the identity.
The twists in \cref{eq:order0SymOpOrder0Conditions1,eq:order0SymOpOrder0Conditions3} act on valence $(0, 0)$ spinors, so they are just covariant derivatives.
Hence $\coeff{K}{0}{0}$ and $\coeff{N}{0}{0}$ must be constant.
Since we assume that $m \neq 0$, we may divide by it in \cref{eq:order0SymOpOrder0Conditions1} and deduce that $\coeff{K}{0}{0}$ and $\coeff{N}{0}{0}$ are equal.

Substituting this into the ansatz, \cref{eq:0th_order_symop_left_ansatz,eq:0th_order_symop_right_ansatz}, that we made for $L$, we get that
\begin{theorem}\label{thm:condition_for_0th_order_symmetry_operator_for_dirac_equation}
	The only zeroth order symmetry operators for the Dirac equation are multiples of the identity.
\end{theorem}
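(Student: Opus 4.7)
The plan is to follow the general method sketched at the beginning of this section. I would begin with the most general zeroth-order ansatz of \cref{eq:0th_order_symop_left_ansatz,eq:0th_order_symop_right_ansatz}, which writes $\lambda_A$ and $\gamma_{A'}$ as linear combinations of $\phi$ and $\chi$ with coefficient spinors $K_{AB}$, $L_{AA'}$, $M_{AA'}$, $N_{A'B'}$. By \cref{thm:spinor_decomposition}, $K$ and $N$ split into trace parts $\coeff{K}{0}{0}, \coeff{N}{0}{0}$ and symmetric-traceless parts $\coeff{K}{2}{0}, \coeff{N}{0}{2}$, while $L$ and $M$ are already irreducible of valence $(1,1)$.

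Next, I would substitute this ansatz into the symmetry conditions $\sCurlDagger \lambda = m\gamma$ and $\sCurl \gamma = -m\lambda$. Applying the Leibniz rules of \cref{lemma:symmult_liebniz_rules} produces terms of the schematic forms (coefficient)$\odot$(derivative of $\phi$ or $\chi$) and (derivative of coefficient)$\odot$(field). Using \cref{lemma:differential_operator_expressible_as_twists} together with \cref{eq:left_dirac,eq:right_dirac} to trade $\sCurlDagger\phi$ and $\sCurl\chi$ for mass terms, everything can be reorganized as two equations depending only on $\phi$, $\chi$, $\sTwist \phi$, and $\sTwist \chi$.

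Then I invoke the exact-set-of-fields argument of \cref{sub:decomposing_equations}: at any point $\phi$, $\chi$, $\sTwist\phi$, $\sTwist\chi$ may be assigned independent arbitrary values, and within each coefficient the irreducible pieces are further independent. The first-order collections yield $\coeff{K}{2}{0} = 0$, $\coeff{N}{0}{2} = 0$, $L = 0$, $M = 0$, eliminating all mixing between $\phi$ and $\chi$ and reducing $K$ and $N$ to pure traces. Substituting these back, the zeroth-order collections reduce to $\sTwist \coeff{K}{0}{0} = 0$, $\sTwist \coeff{N}{0}{0} = 0$, and $m(\coeff{N}{0}{0} - \coeff{K}{0}{0}) = 0$. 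Since $\sTwist$ acting on a scalar is just the covariant derivative and $m \neq 0$, these force $\coeff{K}{0}{0}$ and $\coeff{N}{0}{0}$ to be equal constants, so $\symmetryOperator$ is a multiple of the identity.

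The main obstacle is purely one of bookkeeping: applying $\sCurlDagger$ and $\sCurl$ to products such as $\coeff{K}{2}{0} \SymMult{1}{0} \phi$ via \cref{lemma:symmult_liebniz_rules} generates many cross-terms distributed among several irreducible slots, and one must carefully collect coefficients of each independent component at each derivative order before the cancellations become visible. This is exactly the task automated by \emph{SymManipulator} and \emph{SymSpin}, so conceptually the argument is a direct and mechanical application of the machinery already assembled in the preliminaries.
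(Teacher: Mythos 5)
Your proposal is correct and follows essentially the same route as the paper: the same ansatz \cref{eq:0th_order_symop_left_ansatz,eq:0th_order_symop_right_ansatz}, decomposition of $K$ and $N$ into trace and symmetric parts, substitution into \cref{eq:left_dirac_for_lambda,eq:right_dirac_for_gamma} with the Leibniz rules and \cref{lemma:differential_operator_expressible_as_twists}, then the exact-set-of-fields splitting giving first $\coeff{K}{2}{0}=L=M=\coeff{N}{0}{2}=0$ and then constancy and equality of $\coeff{K}{0}{0}$ and $\coeff{N}{0}{0}$. No gaps; the bookkeeping you describe is exactly what the paper carries out (with computer algebra assistance for the higher orders).
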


\bigskip

In the following sections, the same general method is scaled up by the use of computer algebra and applied to first and second order symmetry operators.

\subsection{First order symmetry operator}%
\label{sub:first_order_symmetry_operator}

Let $\symmetryOperator \from (\phi_A, \chi_{A'}) \mapsto (\lambda_A, \gamma_{A'})$ be of the form
\begin{subequations}
\begin{align}
\lambda_{A}={}&
	K1_{A}{}^{BCA'} (\sTwist \phi)_{BCA'}
	+ L1_{A}{}^{BA'B'} (\sTwist \chi)_{BA'B'}
	+ K0_{A}{}^{B} \phi_{B}
	+ L0_{A}{}^{A'} \chi_{A'}\,,\\
\gamma_{A'}={}&
	M1^{AB}{}_{A'}{}^{B'} (\sTwist \phi)_{ABB'}
	+ N1^{A}{}_{A'}{}^{B'C'} (\sTwist \chi)_{AB'C'}
	+ M0^{A}{}_{A'} \phi_{A}
	+ N0_{A'}{}^{B'}\chi_{B'}\,.
\end{align}
\end{subequations}
By \cref{lemma:differential_operator_expressible_as_twists}, this is the most general form of a first order symmetry operator.

As before, we substitute this into \cref{eq:left_dirac_for_lambda,eq:right_dirac_for_gamma}, collect each order of derivative, and decompose the resulting equations.
There are then in total 18 equations and 12 variables.
They are not stated here since they are terribly complicated while adding nothing conceptually different from \cref{sub:zeroth_order_symmetry_operator}.
The calculations are, in their entirety, available on Github~\cite{myGithubRepo}. 
After simplification, they may be expressed as \cref{thm:condition_for_1st_order_symmetry_operator_for_dirac_equation}.

\begin{definition}\label{def:auxiliary_condition_A}
	Let $\tensor{S}{_A^{A'}}$, $\tensor{T}{^{A'}^{B'}}$, $\tensor{U}{_A_B}$, and $\tensor{R}{_A^{A'}}$ be Killing spinors on a Lorentzian manifold $M$.
	They satisfy \emph{auxiliary condition A} if
	\begin{subequations}
	\begin{align}
		\sCurl S={}&0 \label{eq:order1AuxiliaryConditions1},\\
		\sCurlDagger S={}&0 \label{eq:order1AuxiliaryConditions2},\\
		\sDiv R={}&0 \label{eq:order1AuxiliaryConditions3},\\
		\sCurl T + \sCurlDagger U={}&0 \label{eq:order1AuxiliaryConditions4}.
	\end{align}
	\end{subequations}
\end{definition}

\begin{theorem}\label{thm:condition_for_1st_order_symmetry_operator_for_dirac_equation}
	The massive Dirac equation has a first order symmetry operator if and only if there exist Killing spinors (not all zero) satisfying auxiliary condition A.
	The symmetry operator then takes the form
	\begin{subequations}
	\begin{align}
\lambda ={}&R{\overset{1,1}{\odot}}\sTwist \phi
 + S{\overset{1,1}{\odot}}\sTwist \phi
 + \bigl(O + \tfrac{3}{8} (\sDiv S)\bigr)\phi
 -  \tfrac{1}{2}(\sCurl R){\overset{1,0}{\odot}}\phi
 -  mU{\overset{1,0}{\odot}}\phi
 + T{\overset{0,2}{\odot}}\sTwist \chi\nonumber\\
& + \tfrac{2}{3}(\sCurlDagger U){\overset{0,1}{\odot}}\chi
 -  \tfrac{1}{2} mR{\overset{0,1}{\odot}}\chi
 + \tfrac{3}{2} mS{\overset{0,1}{\odot}}\chi  \label{eq:formOf1stOrderSymOpLambda},\\
\gamma ={}&U{\overset{2,0}{\odot}}\sTwist \phi
 -  \tfrac{2}{3}(\sCurlDagger U){\overset{1,0}{\odot}}\phi
 + \tfrac{1}{2} mR{\overset{1,0}{\odot}}\phi
 + \tfrac{3}{2} mS{\overset{1,0}{\odot}}\phi
 -  S{\overset{1,1}{\odot}}\sTwist \chi
 + R{\overset{1,1}{\odot}}\sTwist \chi\nonumber\\
& + \bigl(O -  \tfrac{3}{8} (\sDiv S)\bigr)\chi
 -  \tfrac{1}{2}(\sCurlDagger R){\overset{0,1}{\odot}}\chi
 + mT{\overset{0,1}{\odot}}\chi  \label{eq:formOf1stOrderSymOpGamma},
\end{align}
	\end{subequations}
	for some constant scalar $O$.
\end{theorem}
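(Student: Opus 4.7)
The plan is to follow the same four-step procedure introduced at the start of the section and demonstrated for the zeroth-order case, now with the first-order ansatz stated above. I would substitute the ansatz for $\lambda_A$ and $\gamma_{A'}$ into the symmetry conditions \cref{eq:left_dirac_for_lambda,eq:right_dirac_for_gamma}, distributing the outer operators $\sCurlDagger$ and $\sCurl$ using the Leibniz rules of \cref{lemma:symmult_liebniz_rules}. Every resulting second-order derivative of $\phi$ or $\chi$ is then rewritten in terms of chains of twists via \cref{eq:order2ReductionToTwists1,eq:order2ReductionToTwists2,eq:order2ReductionToTwists3,eq:order2ReductionToTwists4,eq:order2ReductionToTwists5,eq:order2ReductionToTwists6}, so that both sides become polynomials in $\phi$, $\chi$, $\sTwist \phi$, $\sTwist \chi$ whose coefficients involve the ansatz spinors and their fundamental derivatives.

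Because the Dirac fields form an exact set of fields, each derivative order must vanish independently, and within each order the spinor decomposition of \cref{thm:spinor_decomposition} splits each equation into irreducible components as in \cref{sub:decomposing_equations}. This produces the 18 equations in 12 unknowns referenced in the text. The pieces multiplying $\sTwist \phi$ and $\sTwist \chi$ constrain the first-order coefficients $K1$, $L1$, $M1$, $N1$: most of their irreducible parts must vanish, and the surviving ones reorganize, after relabelling, into four symmetric spinors $S$, $T$, $U$, $R$ of valences $(1,1)$, $(0,2)$, $(2,0)$, $(1,1)$ respectively, matching the objects in \cref{def:auxiliary_condition_A}. The non-triviality clause in the theorem then corresponds to the requirement that at least one of these be nonzero.

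The main obstacle is the algebraic simplification of the remaining zeroth-order equations. I expect them to split into three families. One family forces $\sTwist S = \sTwist T = \sTwist U = \sTwist R = 0$, so by \cref{prop:killing_spinor_condition} each is a Killing spinor. A second family expresses the zeroth-order coefficients $K0$, $L0$, $M0$, $N0$ as specific combinations of $S$, $T$, $U$, $R$ and their fundamental derivatives, producing the explicit formulas in \cref{eq:formOf1stOrderSymOpLambda,eq:formOf1stOrderSymOpGamma} together with a single free constant $O$ corresponding to the identity operator (cf.\ \cref{thm:condition_for_0th_order_symmetry_operator_for_dirac_equation}). The third family consists of the relations that cannot be absorbed into any of these coefficient redefinitions; these are precisely \cref{eq:order1AuxiliaryConditions1,eq:order1AuxiliaryConditions2,eq:order1AuxiliaryConditions3,eq:order1AuxiliaryConditions4}. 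The converse direction is the same computation read backwards: given Killing spinors satisfying auxiliary condition A, substituting into the stated formulas and applying the Leibniz and reduction identities reproduces \cref{eq:left_dirac_for_lambda,eq:right_dirac_for_gamma} on Dirac solutions. Executing all of this by hand is infeasible, so I would rely on \emph{SymManipulator} and \emph{SymSpin} to manage the symmetrizations, Leibniz expansions, and irreducible decompositions.
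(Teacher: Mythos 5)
Your proposal follows exactly the paper's route: the reduced first-order ansatz justified by Lemma~\ref{lemma:differential_operator_expressible_as_twists}, substitution into \cref{eq:left_dirac_for_lambda,eq:right_dirac_for_gamma}, Leibniz expansion and reduction to twists, irreducible decomposition via the exact-set argument yielding the 18 equations in 12 variables, and computer-algebra simplification (with the notebook carrying the actual algebra) that identifies the surviving coefficients with the Killing spinors $S$, $T$, $U$, $R$, the constant $O$, and auxiliary condition A, with the converse read off from the same computation. This matches the paper's proof in both structure and detail.
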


\subsubsection{Interpretation and discussion}%
\label{sub:first_order_symmetry_operator:interpretation_and_discussion}

The geometric interpretation of \cref{eq:order1AuxiliaryConditions1,eq:order1AuxiliaryConditions2} is that $S_A{}^{A'}$ is a closed vector field.
The geometric interpretation of \cref{eq:order1AuxiliaryConditions3} is that $\tensor{R}{_A^{A'}}$ is a Killing vector.
Observe that if $\Phi=0$, then $\sCurl T$ is a Killing vector because $\sTwist\sCurl T=0$ and $\sDiv\sCurl T=0$ due to \eqref{eq:curl_twist_to_twist_curl} and \eqref{eq:div_curl_to_curl_div}.
Similarly, $\sCurlDagger U$ is then also a Killing vector.

The different possible algebraic types of the Weyl spinor are commonly classified by \emph{Petrov type}.
The existence of a nontrivial valence $(2, 0)$ spinor implies that the spacetime is of type D, N, or O~\cite[section 4.7]{AndBackBlue18}.
The geometric interpretation of \cref{eq:order1AuxiliaryConditions4} is that $\tensor{f}{_A^{A'}_B^{B'}} \definedas \tensor{U}{_A_B} \tensor{\conjugate \epsilon}{^{A'}^{B'}} + \tensor{\epsilon}{_A_B} \tensor{T}{^{A'}^{B'}}$ is a Killing--Yano tensor.
This is shown in our \emph{Mathematica} notebook~\cite{myGithubRepo}.

Kamran and McLenaghan~\cite[theorem II]{KamMcL84} have derived the form of the most general first order symmetry operator for the massive Dirac equation using the Dirac basis.
\Cref{thm:condition_for_1st_order_symmetry_operator_for_dirac_equation} is a covariant reformulation of their result.

Lastly for this subsection, let's look at \cref{eq:formOf1stOrderSymOpLambda,eq:formOf1stOrderSymOpGamma} in terms of Lie derivatives.
In \cref{sub:killing_tensors}, we stated that Killing vectors generate infinitesimal isometries, so one might expect that taking a Lie derivative with respect to $R$ is a symmetry operation.
This is true if one takes the conformally weighted Lie derivative.
\cref{eq:formOf1stOrderSymOpLambda,eq:formOf1stOrderSymOpGamma} may be written
\begin{subequations}
\begin{align}
 \lambda ={}&
 \hat{\mathcal{L}}_{R}\phi
 + \hat{\mathcal{L}}_{S}\phi
 + O\phi
 -  mU{\overset{1,0}{\odot}}\phi
 + T{\overset{0,2}{\odot}}\sTwist \chi
 + \tfrac{2}{3}(\sCurlDagger U){\overset{0,1}{\odot}}\chi
 + 2 mS{\overset{0,1}{\odot}}\chi ,\\
\gamma ={}&
 U{\overset{2,0}{\odot}}\sTwist \phi
 - \tfrac{2}{3}(\sCurlDagger U){\overset{1,0}{\odot}}\phi
 + 2 mS{\overset{1,0}{\odot}}\phi
 + \hat{\mathcal{L}}_{R}\chi
 -  \hat{\mathcal{L}}_{S}\chi
 + O\chi
 + mT{\overset{0,1}{\odot}}\chi .
\end{align}
\end{subequations}
This is shown in \emph{Mathematica}~\cite{myGithubRepo}.

\subsection{Second order symmetry operator}%
\label{sub:second_order_symmetry_operator}

Let $\symmetryOperator \from (\phi_A, \chi_{A'}) \mapsto (\lambda_A, \gamma_{A'})$ be of the form
\begin{subequations}
\begin{align}
	\lambda_{A}={}
	& K2_{A}{}^{BCDA'B'} (\sTwist \sTwist \phi)_{BCDA'B'} + L2_{A}{}^{BCA'B'C'} (\sTwist \sTwist \chi)_{BCA'B'C'}\nonumber\\
	&+ K1_{A}{}^{BCA'} (\sTwist \phi)_{BCA'} + L1_{A}{}^{BA'B'} (\sTwist \chi)_{BA'B'}\nonumber\\
	&+ K0_{A}{}^{B} \phi_{B} + L0_{A}{}^{A'} \chi_{A'} \label{eq:2nd_order_ansatz_lambda},\\
	\gamma_{A'}={}
	& N2^{AB}{}_{A'}{}^{B'C'D'} (\sTwist \sTwist \chi)_{ABB'C'D'} + M2^{ABC}{}_{A'}{}^{B'C'} (\sTwist \sTwist \phi)_{ABCB'C'}\nonumber\\
	&+ N1^{A}{}_{A'}{}^{B'C'} (\sTwist \chi)_{AB'C'} + M1^{AB}{}_{A'}{}^{B'} (\sTwist \phi)_{ABB'}\nonumber\\
	&+ M0^{A}{}_{A'} \phi_{A} + \chi^{B'} N0_{A'B'} \label{eq:2nd_order_ansatz_gamma}.
\end{align}
\end{subequations}

As before, we substitute this into \cref{eq:left_dirac_for_lambda,eq:right_dirac_for_gamma}, collect each order of derivative, and decompose the resulting equations.
There are then in total 26 equations and 20 variables.
Simplifying those gives us \cref{thm:condition_for_2nd_order_symmetry_operator_for_dirac_equation}.

\begin{definition}\label{def:auxiliary_condition_B}
	Let $\tensor{V}{_A_B^{A'}^{B'}}$, $\tensor{W}{_A_B^{A'}^{B'}}$, $\tensor{X}{_A_B_C^{A'}}$, and $\tensor{Y}{_A^{A'}^{B'}^{C'}}$ be Killing spinors on a Lorentzian manifold.
	They satisfy \emph{auxiliary condition B} if there exist spinors $\tensor{R}{_A^{A'}}$, $\tensor{T}{^{A'}^{B'}}$, $\tensor{U}{_A_B}$, $\tensor{S}{_A^{A'}}$, and a scalar $O$ such that 
	{\allowdisplaybreaks
	\begin{subequations}
 \begin{align}
\sTwist S={}&\tfrac{1}{3}\bar\Psi{\overset{0,2}{\odot}}V
 -  \tfrac{1}{3}\Psi {\overset{2,0}{\odot}}V
 + \tfrac{1}{4} m\sCurl Y
 + \tfrac{1}{4} m\sCurlDagger X \label{eq:order2AuxiliaryConditions6},\\
\sCurl S={}&{-} \frac{2}{5 m}X{\overset{3,1}{\odot}}\sTwist \Psi
 + \frac{9}{25 m}X{\overset{2,1}{\odot}}\sCurl \Phi
 -  \frac{9}{40 m}\Psi {\overset{2,0}{\odot}}\sDiv X
 -  \frac{1}{2 m}\Psi {\overset{3,0}{\odot}}\sCurl X\nonumber\\*
& -  \tfrac{1}{2} m\sDiv X
 + \tfrac{1}{3}\sDiv \sCurl V
 -  \tfrac{4}{3}\Phi {\overset{1,2}{\odot}}V \label{eq:order2AuxiliaryConditions7},\\
\sCurlDagger S={}&{-} \frac{1}{2 m}\bar\Psi{\overset{0,3}{\odot}}\sCurlDagger Y
 + \frac{9}{25 m}Y{\overset{1,2}{\odot}}\sCurlDagger \Phi
 -  \frac{2}{5 m}Y{\overset{1,3}{\odot}}\sTwist \bar\Psi
 -  \frac{9}{40 m}\bar\Psi{\overset{0,2}{\odot}}\sDiv Y\nonumber\\*
& -  \tfrac{1}{2} m\sDiv Y
 -  \tfrac{1}{3}\sDiv \sCurlDagger V
 + \tfrac{4}{3}\Phi {\overset{2,1}{\odot}}V \label{eq:order2AuxiliaryConditions8},\\
\sTwist O={}&\tfrac{3}{10}V{\overset{2,1}{\odot}}\sCurl \Phi
 + \tfrac{3}{10}V{\overset{1,2}{\odot}}\sCurlDagger \Phi
 -  \tfrac{2}{3} m^2\sDiv V
 + \tfrac{1}{8} m\sDiv \sCurl Y
 -  \tfrac{1}{2} m\bar\Psi{\overset{0,3}{\odot}}Y\nonumber\\*
& -  \tfrac{1}{2} m\Phi {\overset{1,2}{\odot}}Y
 + \tfrac{1}{2} m\Phi {\overset{2,1}{\odot}}X
 -  \tfrac{1}{10}\Psi {\overset{3,0}{\odot}}\sCurl V
 -  \tfrac{1}{10}\bar\Psi{\overset{0,3}{\odot}}\sCurlDagger V
 + \tfrac{1}{2} m\Psi {\overset{3,0}{\odot}}X\nonumber\\*
& -  \tfrac{1}{8} m\sDiv \sCurlDagger X \label{eq:order2AuxiliaryConditions9},
\end{align}
	\end{subequations}
	\begin{subequations}
\begin{align}
\sTwist R={}&\tfrac{1}{3}\bar\Psi{\overset{0,2}{\odot}}W
 -  \tfrac{1}{3}\Psi {\overset{2,0}{\odot}}W \label{eq:order2AuxiliaryConditions1},\\
\sDiv R={}&0 \label{eq:order2AuxiliaryConditions2},\\
\sTwist T={}& {-}\tfrac{2}{3} m\sCurlDagger W \label{eq:order2AuxiliaryConditions3},\\
\sTwist U={}& {-}\tfrac{2}{3} m\sCurl W \label{eq:order2AuxiliaryConditions4},\\
\sCurl T + \sCurlDagger U={}&{-} \frac{1}{2 m}W{\overset{2,1}{\odot}}\sCurl \Phi
 + \frac{3}{5 m}W{\overset{2,2}{\odot}}\sTwist \Phi
 -  \frac{1}{2 m}W{\overset{1,2}{\odot}}\sCurlDagger \Phi
 -  \frac{4}{5 m}W{\overset{1,1}{\odot}}\sTwist \Lambda\nonumber\\*
& + \frac{2}{5 m}\Phi {\overset{2,1}{\odot}}\sCurl W
 + \frac{2}{5 m}\Phi {\overset{1,2}{\odot}}\sCurlDagger W
 + \frac{4}{15 m}\Phi {\overset{1,1}{\odot}}\sDiv W
 -  \frac{1}{15 m}\sTwist \sDiv \sDiv W\nonumber\\*
& + \tfrac{4}{3} m\sDiv W
 + \frac{3}{10 m}\Psi {\overset{3,0}{\odot}}\sCurl W
 + \frac{3}{10 m}\bar\Psi{\overset{0,3}{\odot}}\sCurlDagger W \label{eq:order2AuxiliaryConditions5}.
 \end{align}
	\end{subequations}
	}
\end{definition}

\begin{theorem}\label{thm:condition_for_2nd_order_symmetry_operator_for_dirac_equation}
	The massive Dirac equation has a second order symmetry operator if and only if there exist Killing spinors (not all zero) satisfying auxiliary condition B.
	The symmetry operator is then a linear combination of a symmetry operator of the \emph{first kind}, 
	{\allowdisplaybreaks
	\begin{subequations}
\begin{align}
\lambda ={}&V{\overset{2,2}{\odot}}\sTwist \sTwist \phi
 -  \tfrac{2}{3}(\sCurl V){\overset{2,1}{\odot}}\sTwist \phi
 + \tfrac{8}{9}(\sDiv V){\overset{1,1}{\odot}}\sTwist \phi
 -  mX{\overset{2,1}{\odot}}\sTwist \phi
 + S{\overset{1,1}{\odot}}\sTwist \phi\nonumber\\
& -  \tfrac{1}{3}(\sCurl \sDiv V){\overset{1,0}{\odot}}\phi
 + \tfrac{1}{3}(\Phi {\overset{1,2}{\odot}}V){\overset{1,0}{\odot}}\phi
 -  \frac{9}{50 m}(X{\overset{2,1}{\odot}}\sCurl \Phi){\overset{1,0}{\odot}}\phi
 + \frac{9}{80 m}(\Psi {\overset{2,0}{\odot}}\sDiv X){\overset{1,0}{\odot}}\phi\nonumber\\
& + \frac{1}{5 m}(X{\overset{3,1}{\odot}}\sTwist \Psi){\overset{1,0}{\odot}}\phi
 + \frac{1}{4 m}(\Psi {\overset{3,0}{\odot}}\sCurl X){\overset{1,0}{\odot}}\phi\nonumber\\
& + \bigl(O + \tfrac{3}{8} (\sDiv S) + \tfrac{2}{15} (\sDiv \sDiv V) -  \tfrac{8}{15} (\Phi {\overset{2,2}{\odot}}V)\bigr)\phi
 + Y{\overset{1,3}{\odot}}\sTwist \sTwist \chi
 -  \tfrac{3}{4}(\sCurl Y){\overset{1,2}{\odot}}\sTwist \chi\nonumber\\
& + \tfrac{3}{4}(\sDiv Y){\overset{0,2}{\odot}}\sTwist \chi
 -  \tfrac{2}{3} mV{\overset{1,2}{\odot}}\sTwist \chi
 -  \tfrac{1}{4}(\sCurl \sDiv Y){\overset{0,1}{\odot}}\chi
 + \tfrac{1}{2}(\Phi {\overset{1,2}{\odot}}Y){\overset{0,1}{\odot}}\chi
 + \tfrac{3}{2} mS{\overset{0,1}{\odot}}\chi  \label{eq:formOf2ndOrderSymOpLambda2ndKind},\\
\gamma ={}&X{\overset{3,1}{\odot}}\sTwist \sTwist \phi
 -  \tfrac{3}{4}(\sCurlDagger X){\overset{2,1}{\odot}}\sTwist \phi
 + \tfrac{3}{4}(\sDiv X){\overset{2,0}{\odot}}\sTwist \phi
 + \tfrac{2}{3} mV{\overset{2,1}{\odot}}\sTwist \phi
 -  \tfrac{1}{4}(\sCurlDagger \sDiv X){\overset{1,0}{\odot}}\phi\nonumber\\
& + \tfrac{1}{2}(\Phi {\overset{2,1}{\odot}}X){\overset{1,0}{\odot}}\phi
 + \tfrac{3}{2} mS{\overset{1,0}{\odot}}\phi
 + V{\overset{2,2}{\odot}}\sTwist \sTwist \chi
 -  S{\overset{1,1}{\odot}}\sTwist \chi
 -  \tfrac{2}{3}(\sCurlDagger V){\overset{1,2}{\odot}}\sTwist \chi\nonumber\\
& + \tfrac{8}{9}(\sDiv V){\overset{1,1}{\odot}}\sTwist \chi
 + mY{\overset{1,2}{\odot}}\sTwist \chi
 -  \tfrac{1}{3}(\sCurlDagger \sDiv V){\overset{0,1}{\odot}}\chi
 + \tfrac{1}{3}(\Phi {\overset{2,1}{\odot}}V){\overset{0,1}{\odot}}\chi\nonumber\\
& -  \frac{1}{4 m}(\bar\Psi{\overset{0,3}{\odot}}\sCurlDagger Y){\overset{0,1}{\odot}}\chi
 -  \frac{1}{5 m}(Y{\overset{1,3}{\odot}}\sTwist \bar\Psi){\overset{0,1}{\odot}}\chi
 -  \frac{9}{80 m}(\bar\Psi{\overset{0,2}{\odot}}\sDiv Y){\overset{0,1}{\odot}}\chi\nonumber\\
& + \frac{9}{50 m}(Y{\overset{1,2}{\odot}}\sCurlDagger \Phi){\overset{0,1}{\odot}}\chi
 + \bigl(O -  \tfrac{3}{8} (\sDiv S) + \tfrac{2}{15} (\sDiv \sDiv V) -  \tfrac{8}{15} (\Phi {\overset{2,2}{\odot}}V)\bigr)\chi  \label{eq:formOf2ndOrderSymOpGamma2ndKind},
\end{align}
	\end{subequations}
	and a symmetry operator of the \emph{second kind},
	\begin{subequations}
\begin{align}
\lambda ={}&W{\overset{2,2}{\odot}}\sTwist \sTwist \phi
 -  \tfrac{2}{3}(\sCurl W){\overset{2,1}{\odot}}\sTwist \phi
 + \tfrac{8}{9}(\sDiv W){\overset{1,1}{\odot}}\sTwist \phi
 + R{\overset{1,1}{\odot}}\sTwist \phi
 -  \tfrac{1}{2}(\sCurl R){\overset{1,0}{\odot}}\phi\nonumber\\
& -  \tfrac{2}{9}(\sCurl \sDiv W){\overset{1,0}{\odot}}\phi
 -  mU{\overset{1,0}{\odot}}\phi
 + \bigl(\tfrac{1}{9} (\sDiv \sDiv W) -  \tfrac{1}{3} (\Phi {\overset{2,2}{\odot}}W)\bigr)\phi
 + \tfrac{4}{3} mW{\overset{1,2}{\odot}}\sTwist \chi
 + T{\overset{0,2}{\odot}}\sTwist \chi\nonumber\\
& -  \tfrac{2}{3}(\sCurl T){\overset{0,1}{\odot}}\chi
 -  \tfrac{1}{2} mR{\overset{0,1}{\odot}}\chi
 + \tfrac{4}{9} m(\sDiv W){\overset{0,1}{\odot}}\chi  \label{eq:formOf2ndOrderSymOpLambda1stKind},\\
\gamma ={}&\tfrac{4}{3} mW{\overset{2,1}{\odot}}\sTwist \phi
 + U{\overset{2,0}{\odot}}\sTwist \phi
 -  \tfrac{2}{3}(\sCurlDagger U){\overset{1,0}{\odot}}\phi
 + \tfrac{4}{9} m(\sDiv W){\overset{1,0}{\odot}}\phi
 + \tfrac{1}{2} mR{\overset{1,0}{\odot}}\phi
 -  W{\overset{2,2}{\odot}}\sTwist \sTwist \chi\nonumber\\
& -  \tfrac{8}{9}(\sDiv W){\overset{1,1}{\odot}}\sTwist \chi
 + \tfrac{2}{3}(\sCurlDagger W){\overset{1,2}{\odot}}\sTwist \chi
 + R{\overset{1,1}{\odot}}\sTwist \chi
 -  \tfrac{1}{2}(\sCurlDagger R){\overset{0,1}{\odot}}\chi
 + \tfrac{2}{9}(\sCurlDagger \sDiv W){\overset{0,1}{\odot}}\chi\nonumber\\
& + mT{\overset{0,1}{\odot}}\chi
 + \bigl(- \tfrac{1}{9} (\sDiv \sDiv W) + \tfrac{1}{3} (\Phi {\overset{2,2}{\odot}}W)\bigr)\chi  \label{eq:formOf2ndOrderSymOpGamma1stKind}.
\end{align}
	\end{subequations}
	}
\end{theorem}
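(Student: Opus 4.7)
The plan is to apply the four-step method illustrated in \cref{sub:zeroth_order_symmetry_operator} and used also for the first order, now at second order. First, I would substitute the ansatz \cref{eq:2nd_order_ansatz_lambda,eq:2nd_order_ansatz_gamma} into the symmetry conditions \cref{eq:left_dirac_for_lambda,eq:right_dirac_for_gamma}. The Leibniz rules \cref{eq:twist_product_rule,eq:curl_product_rule,eq:curl_dagger_product_rule,eq:div_product_rule} convert every $\sCurl$ and $\sCurlDagger$ that falls on the ansatz into pieces involving $\sTwist$, $\sDiv$, $\sCurl$, $\sCurlDagger$ acting on the coefficients $K2,\dots,N0$, times $\sTwist$--towers acting on $\phi$ and $\chi$.

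Second, I would eliminate every non-$\sTwist$ derivative acting on $\phi$ and $\chi$. By \cref{lemma:differential_operator_expressible_as_twists}, all such derivatives can be reduced to pure $\sTwist$-towers modulo curvature and mass terms. Concretely, at second order one only needs the reductions \cref{eq:order2ReductionToTwists1,eq:order2ReductionToTwists2,eq:order2ReductionToTwists3,eq:order2ReductionToTwists4,eq:order2ReductionToTwists5,eq:order2ReductionToTwists6} together with the third-order ones \cref{eq:order3ReductionToTwists1,eq:order3ReductionToTwists2,eq:order3ReductionToTwists3,eq:order3ReductionToTwists4,eq:order3ReductionToTwists5,eq:order3ReductionToTwists6}, since $\sCurl$ or $\sCurlDagger$ acting on $\sTwist\sTwist S$ is exactly a third-order operation. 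After this step, the equations are polynomial in the symbols $\phi$, $\chi$, $\sTwist\phi$, $\sTwist\chi$, $\sTwist\sTwist\phi$, $\sTwist\sTwist\chi$, with coefficients assembled from the ansatz fields, the curvature spinors $\Psi,\Phi,\Lambda,\bar\Psi$, and their fundamental derivatives.

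Third, I would invoke the exact-set-of-fields property from \cref{sub:decomposing_equations}. Each $\sTwist$-tower takes independent values at a point, so the equations split by twist-order; within each tower the irreducible-decomposition technique (symmetrizing in all free indices, then extracting all independent traces) produces the system of 26 scalar spinor equations on 20 irreducible ansatz pieces that the paper refers to. I would then solve this linear system algebraically: the highest twist-order pieces force $K2,L2,M2,N2$ to be totally symmetric with $\sTwist$ vanishing, hence Killing spinors, which become $V,W,X,Y$; the intermediate orders determine $K1,L1,M1,N1$ in terms of $\sCurl,\sCurlDagger,\sDiv$ of $V,W,X,Y$ and introduce the Killing spinors $R,S,T,U$; the zeroth-order pieces fix the remaining trace parts and introduce the scalar $O$. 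Whatever cannot be solved algebraically becomes a differential constraint on $V,W,X,Y,R,S,T,U,O$; collecting and canonicalising these gives precisely \cref{eq:order2AuxiliaryConditions1,eq:order2AuxiliaryConditions2,eq:order2AuxiliaryConditions3,eq:order2AuxiliaryConditions4,eq:order2AuxiliaryConditions5,eq:order2AuxiliaryConditions6,eq:order2AuxiliaryConditions7,eq:order2AuxiliaryConditions8,eq:order2AuxiliaryConditions9}, and reading off the coefficients of the twist-towers yields the explicit forms \cref{eq:formOf2ndOrderSymOpLambda1stKind,eq:formOf2ndOrderSymOpGamma1stKind,eq:formOf2ndOrderSymOpLambda2ndKind,eq:formOf2ndOrderSymOpGamma2ndKind}. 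Splitting the whole system into a $W,R,T,U$-block and a $V,X,Y,S$-block accounts for the decomposition into operators of the first and second kind.

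The main obstacle is the sheer combinatorial bulk. Canonicalising each of the 26 equations requires repeatedly commuting fundamental derivatives via \cref{eq:div_curl_to_curl_div,eq:div_curldagger_to_curldagger_div,eq:curl_twist_to_twist_curl,eq:curldagger_twist_to_twist_curldagger,eq:div_twist_to_curl_curldagger_and_twist_div,eq:div_twist_to_curldagger_curl_and_twist_div,eq:curl_curldagger_to_curldagger_curl_and_twist_div}, expanding $\square$ and $\conjugate\square$ into curvature terms, applying the Leibniz rules of \cref{lemma:symmult_liebniz_rules}, and using the Bianchi identity (\cref{lemma:bianchi_identity}) to canonicalise curvature derivatives, and then recognising that many apparently independent trace-terms are linearly dependent after repeated Bianchi use. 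By-hand execution is impractical; the xAct/SymManipulator/SymSpin pipeline referenced in \cite{myGithubRepo} is what makes the system tractable, and the sufficiency direction is verified by plugging the ansatz back in and checking that auxiliary condition B implies every equation vanishes identically.
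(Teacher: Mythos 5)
Your plan follows essentially the same route as the paper: substitute the ansatz \cref{eq:2nd_order_ansatz_lambda,eq:2nd_order_ansatz_gamma} into \cref{eq:left_dirac_for_lambda,eq:right_dirac_for_gamma}, reduce everything to twist-towers using \cref{lemma:differential_operator_expressible_as_twists} together with the order-two and order-three reductions, split by the exact-set/irreducible-decomposition argument into the 26 equations in 20 variables, and let the xAct/SymManipulator/SymSpin computation solve the algebraic part, leaving auxiliary condition B as the residual differential constraints and the stated operators as the solved coefficients, with sufficiency verified by back-substitution. One detail in your sketch is backwards: the highest twist-order equations force the totally symmetric top-valence parts of $K2$, $L2$, $M2$, $N2$ to vanish (just as $\underset{2,0}{K}{}=0$ at zeroth order), and it is the surviving lower-valence trace parts that become the Killing spinors $V$, $W$, $X$, $Y$ --- a point the computation corrects automatically and which does not change the method.
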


\subsubsection{Interpretation and discussion}%

While these equations are much longer and ungainlier than auxiliary condition A, it is worth to note that
\cref{eq:order2AuxiliaryConditions6,%
eq:order2AuxiliaryConditions7,%
eq:order2AuxiliaryConditions8,%
eq:order2AuxiliaryConditions9}
and 
\cref{eq:order2AuxiliaryConditions1,%
eq:order2AuxiliaryConditions2,%
eq:order2AuxiliaryConditions3,%
eq:order2AuxiliaryConditions4,%
eq:order2AuxiliaryConditions5}
are completely decoupled.
They contain different variables from each other.
Hence dividing the symmetry operator into first and second kind.

Also, if $V$, $W$, $X$ and $Y$ are set to zero, we get back auxiliary condition A, since then
\cref{eq:order2AuxiliaryConditions1,%
eq:order2AuxiliaryConditions3,%
eq:order2AuxiliaryConditions4,%
eq:order2AuxiliaryConditions6}
are the condition that $S$, $R$, $U$, and $T$ are Killing spinors, while
\cref{eq:order2AuxiliaryConditions2,%
eq:order2AuxiliaryConditions5,%
eq:order2AuxiliaryConditions7,%
eq:order2AuxiliaryConditions8}
are precisely auxiliary condition A, and 
\cref{eq:order2AuxiliaryConditions9} is just the existence of constant scalar field, so it adds no restrictions.

Fels and Kamran derived in 1990 a subset of the second order symmetry operators for the massive Dirac equaion that can be defined on a curved spacetime~\cite[theorem 4.1]{FelKam90}.
Their ansatz (4.2) for $L$ is less general than \cref{eq:2nd_order_ansatz_lambda,eq:2nd_order_ansatz_gamma} due to a special form of the second order term.
In terms of our covariant language, it can be expressed as
\begin{subequations}
\begin{align}
\lambda ={}&\underset{2, 2}{\mathbb{K} 2}{\overset{2,2}{\odot}}\sTwist \sTwist \phi
 + \underset{3,1}{\mathbb{K} 1}{}{\overset{2,1}{\odot}}\sTwist \phi
 -  \tfrac{2}{3}\underset{1,1}{\mathbb{K} 1}{}{\overset{1,1}{\odot}}\sTwist \phi
 + \underset{2,0}{\mathbb{K} 0}{}{\overset{1,0}{\odot}}\phi
 -  \tfrac{1}{3}\underset{2, 2}{\mathbb{K} 2}{\overset{1,2}{\odot}}\Phi {\overset{1,0}{\odot}}\phi\nonumber\\
& + (2 m^2 \underset{0, 0}{\mathbb{L} 2} - 6 \underset{0, 0}{\mathbb{L} 2} \Lambda -  \tfrac{1}{2} \underset{0,0}{\mathbb{K} 0}{})\phi
 + \underset{2,2}{\mathbb{L} 1}{}{\overset{1,2}{\odot}}\sTwist \chi
 -  \tfrac{1}{2}\underset{0,2}{\mathbb{L} 1}{}{\overset{0,2}{\odot}}\sTwist \chi
 -  \tfrac{2}{3} m\underset{2, 2}{\mathbb{K} 2}{\overset{1,2}{\odot}}\sTwist \chi
 + \underset{1, 1}{\mathbb{L} 0}{\overset{0,1}{\odot}}\chi  \label{eq:lambdaEqFelsKamran},\\
\gamma ={}&\underset{2,2}{\mathbb{M} 1}{}{\overset{2,1}{\odot}}\sTwist \phi
 -  \tfrac{1}{2}\underset{2,0}{\mathbb{M} 1}{}{\overset{2,0}{\odot}}\sTwist \phi
 + \tfrac{2}{3} m\underset{2, 2}{\mathbb{K} 2}{\overset{2,1}{\odot}}\sTwist \phi
 + \underset{1, 1}{\mathbb{M} 0}{\overset{1,0}{\odot}}\phi
 + (2 m^2 \underset{0, 0}{\mathbb{L} 2} - 6 \underset{0, 0}{\mathbb{L} 2} \Lambda -  \tfrac{1}{2} \underset{0,0}{\mathbb{N} 0}{})\chi\nonumber\\
& + \underset{2, 2}{\mathbb{K} 2}{\overset{2,2}{\odot}}\sTwist \sTwist \chi
 + \underset{1,3}{\mathbb{N} 1}{}{\overset{1,2}{\odot}}\sTwist \chi
 -  \tfrac{2}{3}\underset{1,1}{\mathbb{N} 1}{}{\overset{1,1}{\odot}}\sTwist \chi
 + \underset{0,2}{\mathbb{N} 0}{}{\overset{0,1}{\odot}}\chi
 -  \tfrac{1}{3}\underset{2, 2}{\mathbb{K} 2}{\overset{2,1}{\odot}}\Phi {\overset{0,1}{\odot}}\chi  \label{eq:gammaEqFelsKamran},
\end{align}
\end{subequations}
As in \cref{sub:zeroth_order_symmetry_operator}, the underscript indicates the valence numbers for totally symmetric spinors.
These coefficients can then be matched with the ones in our ansatz to obtain a translation from \cref{eq:lambdaEqFelsKamran,eq:gammaEqFelsKamran} to \cref{eq:2nd_order_ansatz_lambda,eq:2nd_order_ansatz_gamma}.
The most immediate part of this translation is
\begin{align}
	W = Y = X = 0,
\end{align}
Hence the symmetry operators presented in \cite{FelKam90} are a special case of the symmetry operators in \cref{thm:condition_for_2nd_order_symmetry_operator_for_dirac_equation}.
We also remark that Fels and Kamran derived commuting operators, which gives stronger conditions than symmetry operators.
For reference, the full translation is available in our \emph{Mathematica} notebook~\cite[]{myGithubRepo}.

\section{Conclusion}%
\label{sec:conclusion}

In conclusion, the problem of finding symmetry operators to the massive Dirac equation is well-suited for applying computer algebra.

While we have found that there are no nontrivial zeroth order symmetry operators, auxiliary condition A and auxiliary condition B are covariant differential equations involving Killing spinors whose solvability are equivalent to the existence of first and second order symmetry operators respectively.
We managed to interpret auxiliary condition A in fairly direct geometrical terms and auxiliary condition B was found to comprise two decoupled systems of equations that reduced to auxiliary condition A in the case of setting the second order coefficients to zero.

\section*{Acknowledgements}

The authors are grateful to Lars Andersson for identifying \cref{eq:order1AuxiliaryConditions4} as a Killing--Yano condition.
Part of this work was done as one of the authors (S.J.) master thesis project at Chalmers University of Technology.

\providecommand{\newblock}{}

% REFERENCES / BIBLIOGRAPHY
% \nocite{aipauth42Control}
% \bibliographystyle{aipauth4-1mod}
% \bibliography{bibliography}

\begin{thebibliography}{21}%
\makeatletter
\providecommand \@ifxundefined [1]{%
 \@ifx{#1\undefined}
}%
\providecommand \@ifnum [1]{%
 \ifnum #1\expandafter \@firstoftwo
 \else \expandafter \@secondoftwo
 \fi
}%
\providecommand \@ifx [1]{%
 \ifx #1\expandafter \@firstoftwo
 \else \expandafter \@secondoftwo
 \fi
}%
\providecommand \natexlab [1]{#1}%
\providecommand \enquote  [1]{``#1''}%
\providecommand \bibnamefont  [1]{#1}%
\providecommand \bibfnamefont [1]{#1}%
\providecommand \citenamefont [1]{#1}%
\providecommand \href@noop [0]{\@secondoftwo}%
\providecommand \href [0]{\begingroup \@sanitize@url \@href}%
\providecommand \@href[1]{\@@startlink{#1}\@@href}%
\providecommand \@@href[1]{\endgroup#1\@@endlink}%
\providecommand \@sanitize@url [0]{\catcode `\\12\catcode `\$12\catcode
  `\&12\catcode `\#12\catcode `\^12\catcode `\_12\catcode `\%12\relax}%
\providecommand \@@startlink[1]{}%
\providecommand \@@endlink[0]{}%
\providecommand \url  [0]{\begingroup\@sanitize@url \@url }%
\providecommand \@url [1]{\endgroup\@href {#1}{\urlprefix }}%
\providecommand \urlprefix  [0]{URL }%
\providecommand \Eprint [0]{\href }%
\providecommand \doibase [0]{http://dx.doi.org/}%
\providecommand \selectlanguage [0]{\@gobble}%
\providecommand \bibinfo  [0]{\@secondoftwo}%
\providecommand \bibfield  [0]{\@secondoftwo}%
\providecommand \translation [1]{[#1]}%
\providecommand \BibitemOpen [0]{}%
\providecommand \bibitemStop [0]{}%
\providecommand \bibitemNoStop [0]{.\EOS\space}%
\providecommand \EOS [0]{\spacefactor3000\relax}%
\providecommand \BibitemShut  [1]{\csname bibitem#1\endcsname}%
\let\auto@bib@innerbib\@empty
%</preamble>
\bibitem [{\citenamefont {A{\c{c}}{\i}k}\ \emph {et~al.}(2009)\citenamefont
  {A{\c{c}}{\i}k}, \citenamefont {Ertem}, \citenamefont {{\"O}nder},\ and\
  \citenamefont {Ver{\c{c}}in}}]{AckErtOndVer09}%
  \BibitemOpen
  \bibfield  {author} {\bibinfo {author} {\bibfnamefont {{\"O}.}~\bibnamefont
  {A{\c{c}}{\i}k}}, \bibinfo {author} {\bibfnamefont {{\"U}.}~\bibnamefont
  {Ertem}}, \bibinfo {author} {\bibfnamefont {M.}~\bibnamefont {{\"O}nder}}, \
  and\ \bibinfo {author} {\bibfnamefont {A.}~\bibnamefont {Ver{\c{c}}in}},\
  }\bibfield  {title} {\enquote {\bibinfo {title} {First-order symmetries of
  the {D}irac equation in a curved background: a unified dynamical symmetry
  condition},}\ }\href {\doibase 10.1088/0264-9381/26/7/075001} {\bibfield
  {journal} {\bibinfo  {journal} {Class. Quant. Grav.}\ }\textbf {\bibinfo
  {volume} {26}},\ \bibinfo {pages} {075001} (\bibinfo {year} {2009})},\
  \Eprint {http://arxiv.org/abs/0806.1328}{arXiv:0806.1328 [gr-qc]}\BibitemShut
  {NoStop}%
\bibitem [{\citenamefont {{Aksteiner}}\ and\ \citenamefont
  {{B\"{a}ckdahl}}(2021)}]{SymSpinWeb}%
  \BibitemOpen
  \bibfield  {author} {\bibinfo {author} {\bibfnamefont {S.}~\bibnamefont
  {{Aksteiner}}}\ and\ \bibinfo {author} {\bibfnamefont {T.}~\bibnamefont
  {{B\"{a}ckdahl}}},\ }\href@noop {} {\enquote {\bibinfo {title} {Sym{S}pin:
  Symmetric spinors},}\ } (\bibinfo {year} {2021}),\ \bibinfo {note}
  {\href{http://www.xact.es/SymSpin}{http://www.xact.es/SymSpin}}\BibitemShut
  {NoStop}%
\bibitem [{\citenamefont {Aksteiner}\ and\ \citenamefont
  {B\"ackdahl}(2022)}]{SymSpinArticle}%
  \BibitemOpen
  \bibfield  {author} {\bibinfo {author} {\bibfnamefont {S.}~\bibnamefont
  {Aksteiner}}\ and\ \bibinfo {author} {\bibfnamefont {T.}~\bibnamefont
  {B\"ackdahl}},\ }\bibfield  {title} {\enquote {\bibinfo {title} {{A
  space-time calculus based on symmetric 2-spinors}},}\ }\href@noop {} {\
  (\bibinfo {year} {2022})},\ \Eprint
  {http://arxiv.org/abs/2210.02423}{arXiv:2210.02423 [gr-qc]}\BibitemShut
  {NoStop}%
\bibitem [{\citenamefont {Anco}\ and\ \citenamefont
  {Pohjanpelto}(2004)}]{AncPoh04}%
  \BibitemOpen
  \bibfield  {author} {\bibinfo {author} {\bibfnamefont {S.~C.}\ \bibnamefont
  {Anco}}\ and\ \bibinfo {author} {\bibfnamefont {J.}~\bibnamefont
  {Pohjanpelto}},\ }\bibfield  {title} {\enquote {\bibinfo {title} {Symmetries
  and currents of massless neutrino fields, electromagnetic and graviton
  fields},}\ }in\ \href {\doibase 10.1090/crmp/034/01} {\emph {\bibinfo
  {booktitle} {Symmetry in physics}}},\ \bibinfo {series} {CRM Proc. Lecture
  Notes}, Vol.~\bibinfo {volume} {34}\ (\bibinfo  {publisher} {Amer. Math.
  Soc., Providence, RI},\ \bibinfo {year} {2004})\ pp.\ \bibinfo {pages}
  {1--12},\ \Eprint
  {http://arxiv.org/abs/math-ph/0306072}{arXiv:math-ph/0306072}\BibitemShut
  {NoStop}%
\bibitem [{\citenamefont {Andersson}, \citenamefont {B\"{a}ckdahl},\ and\
  \citenamefont {Blue}(2014)}]{AndBaeBlu14a}%
  \BibitemOpen
  \bibfield  {author} {\bibinfo {author} {\bibfnamefont {L.}~\bibnamefont
  {Andersson}}, \bibinfo {author} {\bibfnamefont {T.}~\bibnamefont
  {B\"{a}ckdahl}}, \ and\ \bibinfo {author} {\bibfnamefont {P.}~\bibnamefont
  {Blue}},\ }\bibfield  {title} {\enquote {\bibinfo {title} {{Second order
  symmetry operators}},}\ }\href {\doibase 10.1088/0264-9381/31/13/135015}
  {\bibfield  {journal} {\bibinfo  {journal} {Class. Quant. Grav.}\ }\textbf
  {\bibinfo {volume} {31}},\ \bibinfo {eid} {135015} (\bibinfo {year}
  {2014})},\ \Eprint
  {http://arxiv.org/abs/1402.6252}{arXiv:1402.6252}\BibitemShut {NoStop}%
\bibitem [{\citenamefont {Andersson}, \citenamefont {B{\"a}ckdahl},\ and\
  \citenamefont {Blue}(2018)}]{AndBackBlue18}%
  \BibitemOpen
  \bibfield  {author} {\bibinfo {author} {\bibfnamefont {L.}~\bibnamefont
  {Andersson}}, \bibinfo {author} {\bibfnamefont {T.}~\bibnamefont
  {B{\"a}ckdahl}}, \ and\ \bibinfo {author} {\bibfnamefont {P.}~\bibnamefont
  {Blue}},\ }\bibfield  {title} {\enquote {\bibinfo {title} {Geometry of black
  hole spacetimes},}\ }in\ \href {\doibase 10.1017/9781108186612.002} {\emph
  {\bibinfo {booktitle} {Asymptotic Analysis in General Relativity}}},\
  \bibinfo {series and number} {London Mathematical Society Lecture Note
  Series},\ \bibinfo {editor} {edited by\ \bibinfo {editor} {\bibfnamefont
  {T.}~\bibnamefont {Daud{\'e}}}, \bibinfo {editor} {\bibfnamefont
  {D.}~\bibnamefont {H{\"a}fner}}, \ and\ \bibinfo {editor} {\bibfnamefont
  {J.-P.~E.}\ \bibnamefont {Nicolas}}}\ (\bibinfo  {publisher} {Cambridge
  University Press},\ \bibinfo {year} {2018})\ pp.\ \bibinfo {pages} {9--85},\
  \Eprint {http://arxiv.org/abs/1610.03540}{arXiv:1610.03540
  [gr-qc]}\BibitemShut {NoStop}%
\bibitem [{\citenamefont {Andersson}\ and\ \citenamefont
  {Blue}(2015)}]{AndBlu15}%
  \BibitemOpen
  \bibfield  {author} {\bibinfo {author} {\bibfnamefont {L.}~\bibnamefont
  {Andersson}}\ and\ \bibinfo {author} {\bibfnamefont {P.}~\bibnamefont
  {Blue}},\ }\bibfield  {title} {\enquote {\bibinfo {title} {Hidden symmetries
  and decay for the wave equation on the {K}err spacetime},}\ }\href {\doibase
  10.4007/annals.2015.182.3.1} {\bibfield  {journal} {\bibinfo  {journal} {Ann.
  of Math. (2)}\ }\textbf {\bibinfo {volume} {182}},\ \bibinfo {pages}
  {787--853} (\bibinfo {year} {2015})},\ \Eprint
  {http://arxiv.org/abs/0908.2265}{arXiv:0908.2265 [math.AP]}\BibitemShut
  {NoStop}%
\bibitem [{\citenamefont {{B\"{a}ckdahl}}(2011-2021)}]{SymManipulatorWeb}%
  \BibitemOpen
  \bibfield  {author} {\bibinfo {author} {\bibfnamefont {T.}~\bibnamefont
  {{B\"{a}ckdahl}}},\ }\href@noop {} {\enquote {\bibinfo {title}
  {Sym{M}anipulator: Symmetrized tensor expressions},}\ } (\bibinfo {year}
  {2011-2021}),\ \bibinfo {note}
  {\href{http://www.xact.es/SymManipulator}{http://www.xact.es/SymManipulator}}\BibitemShut
  {NoStop}%
\bibitem [{\citenamefont {Benn}\ and\ \citenamefont
  {Charlton}(1997)}]{BenCha97}%
  \BibitemOpen
  \bibfield  {author} {\bibinfo {author} {\bibfnamefont {I.~M.}\ \bibnamefont
  {Benn}}\ and\ \bibinfo {author} {\bibfnamefont {P.}~\bibnamefont
  {Charlton}},\ }\bibfield  {title} {\enquote {\bibinfo {title} {Dirac symmetry
  operators from conformal {K}illing--{Y}ano tensors},}\ }\href {\doibase
  10.1088/0264-9381/14/5/011} {\bibfield  {journal} {\bibinfo  {journal}
  {Class. Quant. Grav.}\ }\textbf {\bibinfo {volume} {14}},\ \bibinfo {pages}
  {1037--1042} (\bibinfo {year} {1997})}\BibitemShut {NoStop}%
\bibitem [{\citenamefont {Benn}\ and\ \citenamefont {Kress}(2003)}]{BenKre03}%
  \BibitemOpen
  \bibfield  {author} {\bibinfo {author} {\bibfnamefont {I.~M.}\ \bibnamefont
  {Benn}}\ and\ \bibinfo {author} {\bibfnamefont {J.~M.}\ \bibnamefont
  {Kress}},\ }\bibfield  {title} {\enquote {\bibinfo {title} {First-order
  {D}irac symmetry operators},}\ }\href {\doibase 10.1088/0264-9381/21/2/007}
  {\bibfield  {journal} {\bibinfo  {journal} {Class. Quant. Grav.}\ }\textbf
  {\bibinfo {volume} {21}},\ \bibinfo {pages} {427--431} (\bibinfo {year}
  {2003})}\BibitemShut {NoStop}%
\bibitem [{\citenamefont {Cariglia}(2004)}]{Car04}%
  \BibitemOpen
  \bibfield  {author} {\bibinfo {author} {\bibfnamefont {M.}~\bibnamefont
  {Cariglia}},\ }\bibfield  {title} {\enquote {\bibinfo {title} {Quantum
  mechanics of {Y}ano tensors: {D}irac equation in curved spacetime},}\ }\href
  {\doibase 10.1088/0264-9381/21/4/022} {\bibfield  {journal} {\bibinfo
  {journal} {Class. Quant. Grav.}\ }\textbf {\bibinfo {volume} {21}},\ \bibinfo
  {pages} {1051--1077} (\bibinfo {year} {2004})},\ \Eprint
  {http://arxiv.org/abs/hep-th/0305153}{arXiv:hep-th/0305153}\BibitemShut
  {NoStop}%
\bibitem [{\citenamefont {{Cariglia}}, \citenamefont {{Krtou{\v{s}}}},\ and\
  \citenamefont {{Kubiz{\v{n}}{\'a}k}}(2011)}]{CarKrtKub11}%
  \BibitemOpen
  \bibfield  {author} {\bibinfo {author} {\bibfnamefont {M.}~\bibnamefont
  {{Cariglia}}}, \bibinfo {author} {\bibfnamefont {P.}~\bibnamefont
  {{Krtou{\v{s}}}}}, \ and\ \bibinfo {author} {\bibfnamefont {D.}~\bibnamefont
  {{Kubiz{\v{n}}{\'a}k}}},\ }\bibfield  {title} {\enquote {\bibinfo {title}
  {{Commuting symmetry operators of the Dirac equation, Killing-Yano and
  Schouten-Nijenhuis brackets}},}\ }\href {\doibase 10.1103/PhysRevD.84.024004}
  {\bibfield  {journal} {\bibinfo  {journal} {Phys. Rev. D}\ }\textbf {\bibinfo
  {volume} {84}},\ \bibinfo {pages} {024004} (\bibinfo {year} {2011})},\
  \Eprint {http://arxiv.org/abs/1102.4501}{arXiv:1102.4501
  [hep-th]}\BibitemShut {NoStop}%
\bibitem [{\citenamefont {Carter}(1968)}]{Car68}%
  \BibitemOpen
  \bibfield  {author} {\bibinfo {author} {\bibfnamefont {B.}~\bibnamefont
  {Carter}},\ }\bibfield  {title} {\enquote {\bibinfo {title} {Global structure
  of the {K}err family of gravitational fields},}\ }\href {\doibase
  10.1103/PhysRev.174.1559} {\bibfield  {journal} {\bibinfo  {journal} {Phys.
  Rev.}\ }\textbf {\bibinfo {volume} {174}},\ \bibinfo {pages} {1559--1571}
  (\bibinfo {year} {1968})}\BibitemShut {NoStop}%
\bibitem [{\citenamefont {Fels}\ and\ \citenamefont {Kamran}(1990)}]{FelKam90}%
  \BibitemOpen
  \bibfield  {author} {\bibinfo {author} {\bibfnamefont {M.}~\bibnamefont
  {Fels}}\ and\ \bibinfo {author} {\bibfnamefont {N.}~\bibnamefont {Kamran}},\
  }\bibfield  {title} {\enquote {\bibinfo {title} {Non-factorizable separable
  systems and higher-order symmetries of the {D}irac operator},}\ }\href
  {http://www.jstor.org/stable/51731} {\bibfield  {journal} {\bibinfo
  {journal} {Proc. R. Soc. Lond. A}\ }\textbf {\bibinfo {volume} {428}},\
  \bibinfo {pages} {229--249} (\bibinfo {year} {1990})}\BibitemShut {NoStop}%
\bibitem [{\citenamefont {Jacobsson}\ and\ \citenamefont
  {B\"{a}ckdahl}(2022)}]{myGithubRepo}%
  \BibitemOpen
  \bibfield  {author} {\bibinfo {author} {\bibfnamefont {S.}~\bibnamefont
  {Jacobsson}}\ and\ \bibinfo {author} {\bibfnamefont {T.}~\bibnamefont
  {B\"{a}ckdahl}},\ }\href@noop {} {\enquote {\bibinfo {title} {Second order
  symmetry operators for the massive dirac equation},}\ }\bibinfo
  {howpublished}
  {\href{https://github.com/SimonKvantdator/symop-dirac}{https://github.com/SimonKvantdator/symop-dirac}}
  (\bibinfo {year} {2022})\BibitemShut {NoStop}%
\bibitem [{\citenamefont {Kalnins}\ \emph {et~al.}(1992)\citenamefont
  {Kalnins}, \citenamefont {Miller~Jr}, \citenamefont {Williams},\ and\
  \citenamefont {Chandrasekhar}}]{KalMilWil92}%
  \BibitemOpen
  \bibfield  {author} {\bibinfo {author} {\bibfnamefont {E.~G.}\ \bibnamefont
  {Kalnins}}, \bibinfo {author} {\bibfnamefont {W.}~\bibnamefont {Miller~Jr}},
  \bibinfo {author} {\bibfnamefont {G.~C.}\ \bibnamefont {Williams}}, \ and\
  \bibinfo {author} {\bibfnamefont {S.}~\bibnamefont {Chandrasekhar}},\
  }\bibfield  {title} {\enquote {\bibinfo {title} {Recent advances in the use
  of separation of variables methods in general relativity},}\ }\href {\doibase
  10.1098/rsta.1992.0071} {\bibfield  {journal} {\bibinfo  {journal} {Philos.
  Trans. R. Soc. A}\ }\textbf {\bibinfo {volume} {340}},\ \bibinfo {pages}
  {337--352} (\bibinfo {year} {1992})}\BibitemShut {NoStop}%
\bibitem [{\citenamefont {Kamran}\ and\ \citenamefont
  {McLenaghan}(1984)}]{KamMcL84}%
  \BibitemOpen
  \bibfield  {author} {\bibinfo {author} {\bibfnamefont {N.}~\bibnamefont
  {Kamran}}\ and\ \bibinfo {author} {\bibfnamefont {R.~G.}\ \bibnamefont
  {McLenaghan}},\ }\bibfield  {title} {\enquote {\bibinfo {title} {Symmetry
  operators for neutrino and {D}irac fields on curved spacetime},}\ }\href
  {\doibase 10.1103/PhysRevD.30.357} {\bibfield  {journal} {\bibinfo  {journal}
  {Phys. Rev. D}\ }\textbf {\bibinfo {volume} {30}},\ \bibinfo {pages}
  {357--362} (\bibinfo {year} {1984})}\BibitemShut {NoStop}%
\bibitem [{\citenamefont {{Michel}}, \citenamefont {{Radoux}},\ and\
  \citenamefont
  {{{\v{S}}ilhan}}(2014)}]{michel:radoux:silhan:2013arXiv1308.1046M}%
  \BibitemOpen
  \bibfield  {author} {\bibinfo {author} {\bibfnamefont {J.-P.}\ \bibnamefont
  {{Michel}}}, \bibinfo {author} {\bibfnamefont {F.}~\bibnamefont {{Radoux}}},
  \ and\ \bibinfo {author} {\bibfnamefont {J.}~\bibnamefont {{{\v{S}}ilhan}}},\
  }\bibfield  {title} {\enquote {\bibinfo {title} {{Second Order Symmetries of
  the Conformal Laplacian}},}\ }\href {\doibase 10.3842/SIGMA.2014.016}
  {\bibfield  {journal} {\bibinfo  {journal} {SIGMA}\ }\textbf {\bibinfo
  {volume} {10}},\ \bibinfo {eid} {016} (\bibinfo {year} {2014})},\ \Eprint
  {http://arxiv.org/abs/1308.1046}{arXiv:1308.1046 [math-ph]}\BibitemShut
  {NoStop}%
\bibitem [{\citenamefont {Penrose}\ and\ \citenamefont
  {Rindler}(1987)}]{PenRinVol1}%
  \BibitemOpen
  \bibfield  {author} {\bibinfo {author} {\bibfnamefont {R.}~\bibnamefont
  {Penrose}}\ and\ \bibinfo {author} {\bibfnamefont {W.}~\bibnamefont
  {Rindler}},\ }\href {\doibase 10.1017/CBO9780511564048} {\emph {\bibinfo
  {title} {Spinors and space-time. {V}ol. 1}}}\ (\bibinfo  {publisher}
  {Cambridge University Press, Cambridge},\ \bibinfo {year} {1987})\BibitemShut
  {NoStop}%
\bibitem [{\citenamefont {Penrose}\ and\ \citenamefont
  {Rindler}(1988)}]{PenRinVol2}%
  \BibitemOpen
  \bibfield  {author} {\bibinfo {author} {\bibfnamefont {R.}~\bibnamefont
  {Penrose}}\ and\ \bibinfo {author} {\bibfnamefont {W.}~\bibnamefont
  {Rindler}},\ }\href {\doibase 10.1017/CBO9780511524486} {\emph {\bibinfo
  {title} {Spinors and space-time. {V}ol. 2}}},\ Cambridge Monographs on
  Mathematical Physics\ (\bibinfo  {publisher} {Cambridge University Press,
  Cambridge},\ \bibinfo {year} {1988})\BibitemShut {NoStop}%
\bibitem [{\citenamefont {Wald}(1984)}]{Wal84}%
  \BibitemOpen
  \bibfield  {author} {\bibinfo {author} {\bibfnamefont {R.~M.}\ \bibnamefont
  {Wald}},\ }\href {\doibase 10.7208/chicago/9780226870373.001.0001} {\emph
  {\bibinfo {title} {General Relativity}}}\ (\bibinfo  {publisher} {The
  University of Chicago Press},\ \bibinfo {year} {1984})\BibitemShut {NoStop}%
\end{thebibliography}

%merlin.mbs aipauth4-1.bst 2010-07-25 4.21a (PWD, AO, DPC) hacked
%Control: key (0)
%Control: author (72) initials jnrlst
%Control: editor formatted (1) identically to author
%Control: production of article title (0) allowed
%Control: page (1) range
%Control: year (0) verbatim
%Control: production of eprint (0) enabled
%

\end{document}